\newcommand{\investmentlevel}{qualification rate}
\newcommand{\drule}{assessment rule}
\newcommand{\drules}{assessment rules}
\newcommand{\investmentlevels}{qualification rates}
\newcommand{\optinvest}{optimal qualification rate~}
\newcommand{\ptp}{p_{\mathrm{TP}}}
\newcommand{\cfp}{c_{\mathrm{FP}}}
\newcommand{\fpr}{\mathrm{FPR}}
\newcommand{\tpr}{\mathrm{TPR}}
\newcommand{\otheta}{\theta^{\mathrm{opt}}}
\newcommand{\dec}{^\mathrm{dec}}
\newcommand{\sub}{^\mathrm{sub}}
\newcommand{\argmax}{\mathop{\rm argmax}}
\newcommand{\Ind}[1]{\mathbf{1}\{#1\}}
\newcommand{\norm}[1]{\left\|{#1}\right\|}
\newcommand{\trans}{^{\top}}
\renewcommand{\Pr}{\mathbb{P}}
\newcommand{\R}{\mathbb{R}}
\newcommand{\A}{\mathcal{A}}
\newcommand{\calX}{\mathcal{X}}
\newcommand{\calY}{\mathcal{Y}}
\newcommand{\calA}{\mathcal{A}}
\newtheorem{theorem}{Theorem}[section]
\newtheorem{proposition}[theorem]{Proposition}
\newtheorem{corollary}[theorem]{Corollary}
\newtheorem{lemma}[theorem]{Lemma}
\newtheorem{assumption}{Assumption}
\newcommand{\brpi}{\pi^{br}}
\newcommand{\brtheta}{\theta^{br}}
\renewcommand{\hm}{h_{\mathrm{mid}}}
\definecolor{DarkBlue}{rgb}{0.1,0.1,0.5}
\begin{document}

\title{The Disparate Equilibria of Algorithmic Decision Making when Individuals Invest Rationally}

\author{
	Lydia T. Liu\thanks{Correspondence to: lydiatliu@cs.berkeley.edu }\\
	University of California, Berkeley 
	\and Ashia Wilson\\
	Microsoft Research 
	\and Nika Haghtalab \\
	Cornell University 
	\and Adam Tauman Kalai \\
	Microsoft Research 
\and  Christian Borgs \\ 
Microsoft Research 
\and  Jennifer Chayes\\
Microsoft Research }

\date{}

\maketitle

\begin{abstract}

The long-term impact of algorithmic decision making is shaped by the dynamics between the deployed decision rule and individuals' response. Focusing on settings where each individual desires a positive classification---including many important applications such as hiring and school admissions, we study a dynamic learning setting where individuals invest in a positive outcome based on their group's expected gain and the decision rule is updated to maximize institutional benefit. By characterizing the equilibria of these dynamics, we show that  natural challenges to desirable long-term outcomes arise due to heterogeneity across groups and the lack of realizability. We consider two interventions, decoupling the decision rule by group and subsidizing the cost of investment. We show that decoupling achieves optimal outcomes in the realizable case but has discrepant effects that \lledit{may} depend on the initial conditions otherwise. \lledit{In contrast, subsidizing the cost of investment is shown to create better equilibria for the disadvantaged group even in the absence of realizability.}
\end{abstract}


\section{Introduction}


Automated decision-making systems that rely on machine learning are increasingly used for
high-stakes applications, yet their long-term consequences have been controversial and poorly understood.
On one hand, deployed decision making models are updated periodically to assure high performance on the target distribution.
On the other hand, deployed models can reshape the underlying populations thus biasing how the model is updated in the future.
This complex interplay between algorithmic decisions, individual-level responses, and exogeneous societal forces can lead to pernicious long term effects that reinforce or even exacerbate existing social injustices~
\citep{crawford17trouble,whittaker2018ainow}. 
Harmful feedback loops have been observed in automated decision making in several contexts including recommendation systems~\citep{Pariser,Twitter,Chaney:2018confounding}, predictive policing~\citep{Ensign18}, admission decisions~\citep{Lowry657,Barocas16}, and credit markets~\citep{Fuster,Abhay19}. 
These examples underscore the need to better understand the dynamics of algorithmic decision making, in order to align decisions made about people with desirable long-term societal outcomes.

Automated decision-making algorithms rely on observable features to predict some variable of interest. In a setting such as hiring, \lledit{decision making models \emph{assess}} features such as scores on standardized tests, resume, and recommendation letters, to identify \nhedit{individuals that are ideally \emph{qualified} for the job.}
However, equally qualified people from different demographic groups tend to have different features, due to implicit societal biases (e.g., letter writers describe competent men and women differently), gaps in resources (e.g., affluent students can afford different extra-curricular activities) and even distinct tendencies in self-description (e.g., gender can be inferred from biographies \citep{De-Arteaga2019bias}). 
\lledit{Therefore, a model's ability }to identify qualified individuals can widely vary across different groups.

\lledit{The deployed model's ability to identify qualified members of a group affects an individual's incentive to \emph{invest} in their qualification. 
This is because} one's decision to acquire qualification---not observed directly by the algorithm---comes at a cost. 
\nhedit{Moreover, individuals that are identified by the model as qualified (whether or not they are truly qualified) receive a reward.}
Consequently, people invest in acquiring qualifications only when their expected reward from the \lledit{assessment model} beats the investment cost. 

Rational individuals are aware that upon investing they would develop features that are similar to those of qualified individuals in their group, so they gauge their own expected reward from investing by the observed rewards of their group.\footnote{Strong group identification effects can also be seen in empirical studies \citep{Hoxby}. } If qualified people from one group are not duly identified and rewarded, fewer people from that group are incentivized to invest \lledit{in qualifications} in the future. This reduces the overall fraction of qualified people in that group, or the \emph{qualification rate}. As the \nhedit{assessment} model is updated to maximize overall institutional profit on the new population distribution, it may perform even more poorly on qualified individuals from a group with relatively low \investmentlevel, further reducing the group's incentive to invest. 

To understand and mitigate the challenges to long-term welfare and fairness posed by such dynamics, we propose a formal model of sequential learning and decision-making where at each round a new batch of individuals {rationally decide whether to invest in acquiring qualification and the institution updates its assessment rule (a classifier) for assessing and thus rewarding individuals.}
We study the long-term behavior of these dynamics by characterizing their equilibria and comparing these equilibria based on several metrics of social desirability.
Our model can be seen as an extension of \citet{coate93will}'s widely cited work \lledit{to explicitly address heterogeneity in observed features across groups.
While \citet{coate93will}'s model focuses on a single-dimensional feature space, i.e., scores, and assessment rules that act as thresholds on the score, our model considers general, possibly high-dimensional, feature spaces and arbitrary assessment rules, which are typical in high-stakes domains such as hiring, admissions, etc.} 

We find that {two major obstacles to obtaining desirable long-term outcomes are}
heterogeneity across groups and lack of realizability within a group.
Realizability, {which is the existence of a (near) perfect way to assess qualifications of individuals from visible features, leads to equilibria that are (near) optimal on several metrics, such as the resulting qualification rates,} their uniformity across groups, and the institution's utility.
{We study (near) realizability and \lledit{the} lack thereof in Sections~\ref{sec:realizable} and~\ref{sec:non-realizable} respectively.}
Heterogeneity across groups, {i.e., lack of a single assessment rule that perfectly assesses individuals from all groups,} necessitates tradeoffs in the {quality of equilibria across different groups.
We study heterogeneity, as well as interventions for mitigating its negative effects, in Section~\ref{sec:groups}.}
In Section~\ref{sec:experiments}, we empirically study a more challenging setting where the groups are heterogeneous as well as highly non-realizable. For this we \lledit{perform
simulations with a FICO credit score dataset~\citep{fed07} that has been widely used for illustration} in the algorithmic fairness literature.

\paragraph{Interventions}
To mitigate the aforementioned tradeoffs, we consider two common interventions: decoupling the decision policy by group and subsidizing the cost  of investment, {especially when the cost distribution inherently differs by group.
Our model of dynamics sheds a different light on these interventions, complementary to previous work.
We show that decoupling~\citep{dwork18a}---using {group-specific assessment rules}---achieves optimal outcomes when the problem is realizable within each group, but can significantly hurt certain groups when the problem is non-realizable \lledit{and there exist multiple equilibria after decoupling. In particular, decoupling can hurt a group with low initial \investmentlevel~if the utility-maximizing assessment rule for a single group is more disincentivizing to individuals than a joint {assessment rule}, thereby reinforcing the status quo and preventing the group from reaching an equilibrium with higher \investmentlevel.} 


We also study subsidizing individuals' investment cost (e.g. subsidizing tuition for a top high school), especially when the cost distribution is varied across different groups. We {find} that {these subsidies} {increase the qualification rate of} the disadvantaged group at equilibrium, regardless of realizability.
We note that our subsidies, which affect the qualification of individuals directly, are different than those studied under strategic manipulation~\citep{Hu2019disparate} that involve subsidizing individual's cost to manipulate their features \emph{without changing the underlying qualification} (e.g. subsidizing SAT exam preparation \lledit{without changing the student's qualification for college}) and could have adverse effects on disadvantaged groups.
Instead, our theoretical findings resonates with extensive empirical work in economics on the effectiveness of subsidizing opportunities for a disadvantaged group to \lledit{directly improve their outcomes, such as moving to better neighborhoods to access} better educational and environmental resources~\citep{Chetty}. 

\paragraph{Related work}
Our work is related to a rich body of work on algorithmic fairness in dynamic settings \citep{liu18delayed,Hu2018shortterm,hashimoto18demographics,zhang2019longterm,Mouzannar2019socialequality}, strategic classification \citep{Hu2019disparate,Milli2019social,kleinberg18investeffort}, as well as statistical discrimination \lledit{in economics} \citep{Arrow73theory,coate93will,Arrow98economics} We present a detailed discussion of the similarities and differences in Section~\ref{sec:related}.

%


\section{A Dynamic Model of Algorithmic Decision Making }\label{sec:model}

\begin{figure}[t]
\setlength{\belowcaptionskip}{-10pt}
	\centering
	\begin{tikzpicture}[->,shorten >=1pt,auto,node distance=1.5cm,
	thick,main node/.style={circle,draw,font=\Large},sub node/.style={font=\Large}]
	
	\node[main node] (Y) {$Y$};
	\node[main node] (X) [below of=Y] {$X$};
	\node[sub node] (D) [left of=Y] {$y$};
	\node[main node] (A) [above right of=X] {$A$};
	
	\path[every node/.style={font=\small}]
	(Y) edge node [right] {} (X)
	(D) edge node [right] {} (Y)
	(A) edge node [right] {} (X);
	\end{tikzpicture}

	\caption{Causal graph for the individual investment model. The individual intervenes on the node for qualification, $Y$---this corresponds to do$(Y=y)$)---which then affects the distribution of their features $X$, depending on the group~$A$.
	}
	\label{fig:graph-ourmodel}
\end{figure}
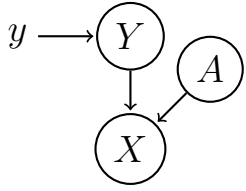


In this section we introduce a model of automated decision making with feedback. We start by introducing the notation used throughout the paper and then describe the details of the interactions between individuals and an institution, and the resulting dynamical system.

\subsection{Notation}
We consider an instance space $\calX$, where $X\in \calX$ denotes the features of an individual that are observable by the institution. We also consider a label space $\calY = \{0,1\}$ where $Y=1$ indicates that an individual has the qualifications desired by the institution and $Y=0$ otherwise. We denote the set of all protected/group attributes by $\calA$ where $A\in \A$ denotes an individual's protected attribute. 
We denote the group proportions by $n_a \coloneqq \Pr(A=a)$ for all $a\in \calA$. 
Furthermore, we denote the \investmentlevel~in group $a\in \calA$ by
$\pi_a \coloneqq \Pr(Y=1 \mid A = a)$.  
%
%
An individual from group $A=a$ who has acquired label $Y=y$ (to become qualified or not)\footnote{This can be seen as the individual performing a do-intervention on $Y$ \citep[see e.g.,][]{Pearl:2009:CMR:1642718}. Thus we may write $do(Y=1)$ for making the decision to acquire qualifications. Our model (Figure~\ref{fig:graph-ourmodel}) assumes that $Y$ is not the child of any node, so we in fact have $\Pr(\cdot \mid do(Y=y)) = \Pr(\cdot \mid Y=y)$. Therefore we drop the $do$-operator whenever we condition on $Y$.} receives features $X$ distributed according to $\Pr(X=x \mid Y=y, A=a)$. This is illustrated in Figure~\ref{fig:graph-ourmodel}.

We also consider a set of parameters $\Theta$ that are used for assessing  qualifications. We use $\hat{Y}_\theta \in \calY$ parameterized by $\theta \in \Theta$ to denote the \emph{assessed qualification} of an individual. We assume that $\hat{Y}_\theta$ only depends on the features $X$\lledit{, which may or may not contain $A$ or its proxies.} \lledit{In later sections, }we will also discuss interventions that allow us to use $\hat{Y}_\theta$ that explicitly depends on group membership $A$.
We respectively define the \emph{true positive rate} and \emph{false positive rate} of $\theta\in \Theta$  on group $a\in \calA$ by 
\begin{align*}
	\tpr_a(\theta) &= \Pr(\hat{Y}_\theta = 1 \mid  Y=1, A=a), \text{ and } \\
	\fpr_a(\theta) &= \Pr(\hat{Y}_\theta = 1 \mid Y=0 , A=a).
	\end{align*}

\subsection{Model Description}

\paragraph{Individual's Rational Response}
We consider a setting where an individual decides whether to acquire qualifications, that is, to invest in obtaining label $Y=1$, prior to observing their feature $X$.
The decision to acquire qualification depends on the qualification assessment rule $\theta\in \Theta$ currently implemented \lledit{by the institution.} 
We will characterize the groups' qualification rates as the \emph{best-response} to $\theta$ by function $\brpi(\theta) = (\brpi_a(\theta))_{a\in \calA}$.

To get label $Y=1$ an individual has to pay a cost $C > 0$. In any group, $C$ is distributed randomly according to the cumulative distribution function (CDF), $G(\cdot)$.\footnote{For the rest of this work, unless otherwise stated, we assume that the distribution of costs, $G$, is the same for every group. In Section~\ref{sec:subsidies} and \ref{sec:experiments}, we will consider the implications of having different cost distributions by group.} After deciding whether to acquire qualifications, an individual gets features $X$ and is assessed by $\theta$. An individual (from any group and regardless of actual qualification) receives a payoff of $w > 0$ if they are assessed to be qualified and payoff of $0$ otherwise. 
Therefore, the expected utility an individual from group $a$ receives from acquiring qualification $Y=1$ is $w \Pr[\hat Y_\theta =1 | Y=1, A=a] - C = w \tpr_a(\theta) - C$ whereas the expected utility for not acquiring the qualification is $w \Pr[\hat Y_\theta =1 | Y=0, A=a] = w \fpr_a(\theta)$.
Given the qualification assessment parameter $\theta\in \Theta$, an individual from group $a$ acquires qualification if and only if the benefit outweighs the costs, that is
 \begin{equation}\label{eq:individ_dec}
w (\tpr_a(\theta) - \fpr_a(\theta) ) > C.
\end{equation}

Then each group's qualification rate as a function of a qualification assessment parameter $\theta$ is
\begin{align*}
\brpi_a(\theta) \coloneqq \Pr(Y=1\mid A=a) &= \Pr( C < w (\text{TPR}_a(\theta) - \text{FPR}_a(\theta) ) ) \\
&= G(w (\text{TPR}_a(\theta) - \text{FPR}_a(\theta) )).
\end{align*}

\paragraph{Institution's Rational Response} 
We consider an institution that \lledit{has to choose} a qualification assessment parameter for accepting individuals \lledit{to maximize its utility}.
We assume that the institution gains $p_{\text{TP}} > 0$ for accepting a qualified individual and 
loses $c_{\text{FP}} > 0$ for accepting an unqualified individual. 
Then the expected utility of the institution for applying parameter $\theta$ is
\begin{align*}
&\ptp  \Pr(\hat{Y}_\theta = 1, Y=1) - \cfp  \Pr(\hat{Y}_\theta = 1, Y=0)\\ =~& 
\ptp \sum_{a\in \calA} \tpr_a(\theta)   \pi_a   n_a -\sum_{a\in \calA} \cfp  \fpr_a(\theta)   (1-\pi_a)   n_a.
\end{align*}
This illustrates that \lledit{the utility maximizing parameter} is a function of $\pi = (\pi_a)_{a\in \calA}$, i.e., the rate of qualification in  each group. We denote this function by $\brtheta(\pi)$, defined as follows:
\begin{equation}\label{eq:instit_dec}
\brtheta(\pi) := \argmax_{\theta\in \Theta} ~ \ptp  \sum_{a\in \calA} \tpr_a(\theta)   \pi_a   n_a -\sum_{a\in \calA} \cfp  \fpr_a(\theta)   (1-\pi_a)   n_a.
\end{equation}
To ensure the above object (and the resulting dynamics) are well-defined, when multiple parameters~$\theta$ achieve the optimal utility we assume that $\brtheta(\pi)$ is uniquely defined using a fixed and well-defined tie-breaking function.

We note that throughout this paper we assume that the institution has exact knowledge of many quantities such as $\tpr_a(\theta)$, $\fpr_a(\theta)$, and $n_a$. In a nutshell, we assume that we have infinitely many samples from the underlying distributions. We discuss this further in Section~\ref{sec:discussion}, and leave the finite sample version of these results to future work.

Although we choose not to focus on game-theoretical aspects in this work, we note that our model can be thought of as a large game \citep{Kalai04largegames} or a game with a continuum of players \citep{Schmeidler1973}.

\paragraph{Dynamical System and Equilibria}
We are primarily interested in the evolution of qualification rate, $\pi$, over time. 
Given a current  rate of qualification $\pi$ the assessment parameter used by the institution in the next step is $\brtheta(\pi)$, which in turn leads to a qualification rate of 
$\brpi(\brtheta(\pi))$ in the next step. Therefore, we define a dynamical system for a given initial state $\pi(0)$ such that at time $t$ we are in state
$\pi(t)  = \Phi(\pi({t-1}))$, where $\Phi = \brpi \circ \brtheta$.

We say that the aforementioned dynamical system is at equilibrium if $\pi = \Phi(\pi)$. Equivalently, we are at an equilibrium if $\pi =  \lim_{n \rightarrow \infty} \Phi^n(\pi(0))$ is well-defined for some $\pi(0)$, where $\Phi^n$ is an $n$-fold composition of $\Phi$. We call such values of $\pi$ 
\emph{equilibria}, or equivalently, \emph{fixed points} of $\Phi$.


In general, $\Phi$ may have multiple fixed points that  demonstrate different characteristics.  We therefore compare the fixed points of $\Phi$ on several metrics of societal importance.

\begin{enumerate}
	\item {\em Stability:} We say that an equilibrium $\pi^*$ is stable if there is a non-zero measure set of initial states $\pi(0)\in [0,1]$ for which $\pi^* = \lim_{n \rightarrow \infty} \Phi^n(\pi(0))$. In particular, if there exists a neighborhood around $\pi^*$ such that all points converge to $\pi^*$ under the dynamics, we say that $\pi^*$ is \emph{locally stable}. \lledit{As such, stable fixed points are robust to small perturbations in the qualification rate, which can occur due to random measurement errors.}

	\item {\em Qualification Rate of Group $a$:} 
	Recall that $\pi_a$ is the fraction of individuals in group $a$ who have decided to invest in qualifications. Since it is more desirable to have a high \investmentlevel~in each group, we may compare equilibria based on $\pi_a$. \nhedit{We refer to $\pi_a = G(w)$ as the \optinvest in group $a$, which is the maximum achievable qualification rate corresponding to the perfect assessment rule.}\footnote{If group $a$ has a group-specific cost distribution, $G_a$, then we refer to $G_a(w)$ as the \optinvest in group $a$.}
	\item {\em Balance:}
	We may be interested in equilibria where the qualification rate is similar across groups, that is, to prioritize equilibria with smaller $\max_{a_1,a_2\in \calA} | \pi_{a_1}^\ast - \pi_{a_2}^\ast|$. When this quantity is $0$ we say that $\pi^*$ is \emph{fully balanced}.
	\item {\em Institutional utility:} We may compare equilibria based on their corresponding institution utility.
\end{enumerate}

\subsection{Examples From the Real World}\label{sec:examples}
Let us instantiate our model in the setting of two important applications from the real world.


\paragraph{College Admissions} Consider the college admission setting, where $X$ corresponds to the features that the college can observe, e.g., a candidate's  test scores and letters of recommendation.
$Y$ indicates whether the candidate meets the qualifications required to succeed in the program. $C$ is the cost of investing in the qualifications,  e.g., the money and opportunity cost of studying or taking additional courses to obtain the required qualifications.
A candidate from group $a$ will develop features from distribution $\Pr[X = x| Y=y, A = a]$, where $y=1$ indicates a qualified candidate. The differences in the feature distribution between groups can be attributed to several factors such as resources that are available to different groups, e.g., letters of recommendations for qualified female and male candidates often emphasize different traits. $\theta$ is the decision parameter used by the college, e.g., $\hat Y_\theta =1$ when the candidate has SAT score of $> 1400$ \lledit{and} an excellent recommendation letter.
The college accepts applicants by trading off between the utility gain, $p_{\text{TP}}$, of admitting qualified candidates and utility cost, $c_{\text{FP}}$, of admitting an unqualified candidates. The candidate is incentivized to acquire the qualifications for the college based on the long term
benefit (described in Equation~\eqref{eq:individ_dec}) that depends on their expected gain $w$ from completing a college degree and how likely it is to be admitted to college for a \lledit{qualified or unqualified} member of the group the candidate belongs to.

\paragraph{Hiring} Consider the hiring setting, where $X$ corresponds to the features that the firm can observe, e.g., a candidate's CV.
$Y$ indicates whether the applicant meets the qualifications required by the firm, e.g., having the required knowledge and the ability to work in a team.
$C$ is the cost of acquiring the qualifications required by the firm, e.g., the (monetary and opportunity) cost of acquiring a college degree or working on a team project. 
Parameter $\theta$ is the hiring parameter used by the firm, e.g., $\hat Y_\theta =1$ when the applicant has a software engineering degree and two years of experience. 
The firm accepts candidates according to utility assessment involving $p_{\text{TP}}$, the profit from hiring a qualified candidate, and $c_{\text{FP}}$, the cost of hiring an unqualified candidate e.g., the loss in productivity or the the cost to replace the employee. The candidate, on the other hand, is incentivized to acquire the qualifications for the job based on factors including their expected salary $w$ and how likely it is to be hired by the firm given how the firm has hired qualified or unqualified candidates from the group the candidate belongs to (Equation~\eqref{eq:individ_dec}). 

We also consider a stylized example of lending in Section~\ref{sec:experiments}.
\section{Importance of (Near) Realizability}\label{sec:realizable}

We start our theoretical investigation of dynamic algorithmic decision making with the classical model of realizability.
In the theory of machine learning, a distribution is called realizable  if there is a decision rule in the set $\Theta$ whose error on the distribution is $0$.
Analogously, we call a setting \emph{realizable}  when there is a decision rule $\otheta \in \Theta$ that perfectly classifies every individual from every group, that is $\tpr_a(\otheta) =1$ and $\fpr_a(\otheta) = 0$ for all $a \in  \calA$.
Realizability is a widely used assumption and is the basis of seminal works such as Boosting~\citep{FS97}.
At a high level, realizability corresponds to the assumption that there is an unknown ground truth \lledit{assessment rule}, for example, in a hypothetical setting where $x$ includes all the information that is sufficient for assessing one's qualification, \lledit{and the chosen set of decision rules is rich enough to contain it}.

%

In static realizable applications of machine learning, the goal is to (approximately) recover $\otheta$ from data.  We show that in the our dynamic setting, under realizability, the 
unique non-zero equilibrium of $\Phi$ is where individuals respond to $\otheta$. Furthermore, each group attains their optimal qualification rate at this equilibrium.

\begin{proposition}[Perfect classfication]
\label{prop:realize}
If there exists $\theta \in \Theta$ such that $\tpr_a(\theta)=1$ and $\fpr_a(\theta) = 0$ for all $a\in \calA$,
then there is a unique non-zero equilibrium with $\pi^*_a = G(w)$ for all $a\in \calA$.
\end{proposition}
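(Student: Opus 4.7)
The plan is to prove the statement in two stages: existence of the equilibrium $\pi^*_a=G(w)$, and uniqueness of this equilibrium among fixed points of $\Phi$ with strictly positive coordinates (the natural reading of ``non-zero equilibrium'' given the conclusion refers to every $a\in\calA$).

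The starting observation for existence is the crude but tight upper bound
\[
U(\theta;\pi)\;=\;\sum_{a\in\calA} n_a\bigl[\ptp\tpr_a(\theta)\pi_a-\cfp\fpr_a(\theta)(1-\pi_a)\bigr]\;\le\;\ptp\sum_{a\in\calA} n_a\pi_a,
\]
which holds for any $\theta$ and $\pi$ because $\tpr_a(\theta)\le 1$ and $\fpr_a(\theta)\ge 0$, and is attained by the hypothesized perfect rule $\otheta$. Consequently any utility-maximizer $\brtheta(\pi)$ must attain the bound as well, which forces $\tpr_a(\brtheta(\pi))=1$ whenever $\pi_a>0$ and $\fpr_a(\brtheta(\pi))=0$ whenever $\pi_a<1$. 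Plugging this into the group best-response $\brpi_a(\theta)=G(w(\tpr_a(\theta)-\fpr_a(\theta)))$ at $\pi^*=(G(w))_{a\in\calA}$ gives $\brpi_a(\brtheta(\pi^*))=G(w)=\pi^*_a$, verifying that $\pi^*$ is a fixed point of $\Phi$.

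For uniqueness, I would take any equilibrium $\pi'$ with $\pi'_a>0$ for every $a$, and set $\theta'=\brtheta(\pi')$. Since both $\otheta$ and $\theta'$ achieve the common maximum $\ptp\sum_a n_a\pi'_a$, subtracting the two utilities and grouping the resulting nonnegative summands term-by-term yields, for each $a$,
\[
\ptp\,\pi'_a\bigl(1-\tpr_a(\theta')\bigr)=0\quad\text{and}\quad\cfp\,\fpr_a(\theta')(1-\pi'_a)=0.
\]
Combined with $\pi'_a>0$, the first identity gives $\tpr_a(\theta')=1$. The equilibrium condition $\pi'_a=G(w(\tpr_a(\theta')-\fpr_a(\theta')))\le G(w)$ then closes the argument: if $\pi'_a<1$, the second identity gives $\fpr_a(\theta')=0$ and hence $\pi'_a=G(w)$; if $\pi'_a=1$, the inequality forces $G(w)=1=\pi'_a$.

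The main subtlety I foresee is the precise interpretation of ``non-zero equilibrium,'' because mixed fixed points with some $\pi'_a=0$ and others $\pi'_b=G(w)$ may in principle be compatible with the first-order identities above (they only constrain $\theta'$ on groups with $\pi'_a>0$), depending on the richness of $\Theta$ and the tie-breaking rule. The cleanest reading, and the one I would adopt in writing up the proof, is that ``non-zero'' means $\pi'_a>0$ for \emph{every} $a\in\calA$, in which case the argument above is complete; a parenthetical remark can note that any tie-breaker preferring $\otheta$ would rule out the mixed case as well, since $\brtheta(\pi')$ would then have $\tpr_a=1$ on a group with $\pi'_a=0$, contradicting the equilibrium condition $0=G(w\cdot 1)>0$.
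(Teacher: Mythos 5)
Your proof is correct; the paper in fact states Proposition~\ref{prop:realize} without an explicit proof, and your argument---comparing the utility of $\brtheta(\pi)$ against that of the perfect rule $\otheta$ to force $\tpr_a=1$ (for $\pi_a>0$) and $\fpr_a=0$ (for $\pi_a<1$), then reading off $\pi^*_a=G(w)$ from the individuals' best response---is precisely the $\epsilon=0$, multi-group specialization of the comparison argument the paper uses to prove the near-realizable Theorem~\ref{thm:eqi_approx_real} in Appendix~\ref{app:theoremproof}. Your explicit flagging of the ambiguity in ``non-zero equilibrium'' (adopting the reading $\pi'_a>0$ for every $a$, and noting that mixed fixed points with some zero coordinates are only excluded via tie-breaking or richness assumptions on $\Theta$ and $G$) is a fair point that the paper glosses over.
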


While realizability is a common assumption in the theory of machine learning, it rarely captures the subtleties that exist in automated decision making in practice. Next, we consider a mild relaxation of realizability and consider a setting where a near-perfect  decision rule $\theta\in \Theta$ exists such that $\tpr_a(\theta) \geq 1-\epsilon$ and $\fpr_a(\theta)\leq \epsilon$. As we show (and prove in Appendix~\ref{app:theoremproof}), when there is a single near-realizable group the main message of Proposition~\ref{prop:realize} remains effectively the same. That is, all equilibria that are reachable from initial points that are not too extreme approximately maximize \lledit{the group's} \investmentlevel.

%


\begin{theorem}[Equilibria under near-realizability]\label{thm:eqi_approx_real}
Let $|\calA | = 1$ and assume that $\ptp = \cfp = 1$. 
Assume that for fixed $\epsilon \in (0,1)$ , $s \in (0,1/2)$, $G$ is $L_G$-Lipschitz with property that $1-s \ge G(w) \ge s+ \frac{L_G w \epsilon}{s}$, and  there is $\theta\in \Theta$ such that 
\[ \tpr(\theta) \geq 1-\epsilon \text{ and } \fpr(\theta) \leq \epsilon.\] 
Then for any initial investment $\pi(0) \in [s,1-s]$,  $\pi^* = \lim_{n\rightarrow\infty} \Phi^{n}(\pi(0))$ is such that \[\pi^* \geq G(w(1-\epsilon/s)).\]
\end{theorem}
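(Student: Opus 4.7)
The plan is to leverage near-realizability to force the institution's chosen rule $\brtheta(\pi)$ to have a large ``margin'' $\tpr(\brtheta(\pi)) - \fpr(\brtheta(\pi))$ whenever the current qualification rate $\pi$ is bounded away from $0$ and $1$, and then to use the Lipschitz/margin assumption on $G$ to show that the dynamics preserve the interval $[s,1-s]$ and push every iterate to at least $G(w(1-\epsilon/s))$.

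First I would use the near-realizable rule $\otheta$ as a benchmark in the institution's optimization. With $|\calA|=1$ and $\ptp=\cfp=1$, the objective is $U(\theta)=\tpr(\theta)\pi-\fpr(\theta)(1-\pi)$, so $U^\ast := U(\brtheta(\pi)) \geq U(\otheta) \geq (1-\epsilon)\pi - \epsilon(1-\pi) = \pi - \epsilon$. Writing $\tpr,\fpr$ for the values at $\brtheta(\pi)$, this rearranges to $(1-\tpr)\pi + \fpr(1-\pi) \leq \epsilon$. Then I would split into two cases. For $\pi \leq 1/2$, solving the utility identity for $\tpr$ yields
\[
\tpr - \fpr \;=\; U^\ast/\pi + \fpr\,(1-2\pi)/\pi \;\geq\; U^\ast/\pi \;\geq\; 1 - \epsilon/\pi,
\]
since $(1-2\pi)/\pi \geq 0$ and $\fpr \geq 0$. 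Symmetrically, for $\pi \geq 1/2$, solving for $\fpr$ yields
\[
\tpr - \fpr \;=\; [\tpr(1-2\pi)+U^\ast]/(1-\pi) \;\geq\; [(1-2\pi)+U^\ast]/(1-\pi) \;\geq\; 1 - \epsilon/(1-\pi),
\]
using $\tpr \leq 1$ together with $1-2\pi \leq 0$. Combining the two cases gives $\tpr(\brtheta(\pi))-\fpr(\brtheta(\pi)) \geq 1-\epsilon/\min(\pi,1-\pi)$, which on $\pi\in[s,1-s]$ is at least $1-\epsilon/s$.

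Next I would close the loop with the Lipschitz assumption on $G$. Monotonicity of $G$ yields $\Phi(\pi)=G(w(\tpr-\fpr)) \geq G(w(1-\epsilon/s))$, and Lipschitzness combined with the hypothesis $G(w)\geq s+L_G w\epsilon/s$ gives $G(w(1-\epsilon/s)) \geq G(w)-L_Gw\epsilon/s \geq s$. Since also $\Phi(\pi)\leq G(w)\leq 1-s$, the map $\Phi$ sends $[s,1-s]$ into $[G(w(1-\epsilon/s)),\,1-s]$. Hence starting from $\pi(0)\in[s,1-s]$, induction gives $\pi(t)\in[s,1-s]$ for all $t$ and $\pi(t)\geq G(w(1-\epsilon/s))$ for all $t\geq 1$; passing to the limit gives $\pi^\ast\geq G(w(1-\epsilon/s))$.

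The main obstacle is the margin bound: the naive approach that splits the constraint $(1-\tpr)\pi+\fpr(1-\pi)\leq\epsilon$ into its two non-negative summands only yields $\tpr-\fpr \geq 1-\epsilon/\pi-\epsilon/(1-\pi)$, which degrades by roughly a factor of two near $\pi=1/2$ and does not match the claimed $\epsilon/s$ scaling at the endpoints. The case split above extracts the tighter $\epsilon/\min(\pi,1-\pi)$ bound by solving the utility identity exactly rather than invoking the two sub-constraints separately; the remaining ingredients (invariance under $\Phi$ and passing to the limit) are then routine given the Lipschitz assumption.
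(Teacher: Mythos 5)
Your proposal is correct and follows essentially the same route as the paper's proof: both compare the institution's optimal utility against the near-perfect rule to get $U^\ast \geq \pi - \epsilon$, perform the same case split at $\pi = 1/2$ to extract the margin bound $\tpr - \fpr \geq 1 - \epsilon/\min(\pi, 1-\pi) \geq 1 - \epsilon/s$, and then use monotonicity and Lipschitzness of $G$ to show $[s,1-s]$ is invariant so the argument iterates. The only cosmetic difference is that the paper bounds $\fpr \leq \epsilon/(1-\pi)$ first and substitutes, whereas you solve the utility identity for $\fpr$ directly; the resulting bounds are identical.
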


A nice aspect of the above results is that the assumption of realizability or near-realizability can be validated from the data. That is, the decision maker can compute whether there is $\theta\in \Theta$ such that $\tpr(\theta)\geq 1-\epsilon$ and $\fpr(\theta)\leq \epsilon$. If so, then the decision maker can rest assured that the dynamical system is on the path towards achieving near optimal investment by the individuals. Another nice aspect of these results is the characterization of the equilibria in terms of the CDF of the cost distribution. This allows us to use this framework for studying interventions that change the cost function directly. \lledit{One such intervention is subsidizing the cost for individuals so that the cumulative distribution function of the cost, given by $G(x)$, is increased by a sufficient amount at every cost level $x$. The following corollary, proved in Appendix \ref{app:corproof}, shows that under this kind of subsidy, the equilibria reached by the dynamics will have higher \investmentlevel~than any fixed point of the dynamics before subsidy, as long as the initial points are not too extreme. As we are considering different cost distribution functions in the following corollary, we denote the dynamics corresponding to cost distribution function $G$ as $\Phi_G$.}

\begin{corollary}[Subsidizing the cost of investment] \label{cor:subsidy}
Let $|\calA | = 1$ and assume that $\ptp = \cfp = 1$. 
Assume that for fixed $\epsilon \in (0,1)$ , $s \in (0,1/2)$, $G$ is $L_G$-Lipschitz with property that $1-s \ge G(w) \ge s+ \frac{L_G w \epsilon}{s}$, and there is $\theta\in \Theta$ such that 
\[ \tpr(\theta) \geq 1-\epsilon \text{ and } \fpr(\theta) \leq \epsilon.\] 
Let $\pi^\ast >0$ be a fixed point of the dynamics $\Phi_G$. Suppose $\bar G$ is a strictly increasing, $L_{\bar G}$-Lipschitz CDF such that $1 - s \geq \bar G(x) \ge s+ \frac{L_{\bar G} w \epsilon}{s}$ and  $\bar G(x(1- \epsilon/s)) \geq G(x)$ for all $x$ in the domain of $G$. 
Then for any initial investment $\pi(0) \in [s,1-s]$, there exists a $\bar \pi \geq \pi^\ast$, such that $\bar \pi = \lim_{n\rightarrow\infty} \Phi_{\bar G}^{n}(\pi(0))$.

\end{corollary}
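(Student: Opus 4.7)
The plan is to reduce the corollary to Theorem~\ref{thm:eqi_approx_real} applied to the subsidized cost CDF $\bar G$, and then use the comparison inequality $\bar G(x(1-\epsilon/s)) \geq G(x)$ to argue that the resulting lower bound dominates the original fixed point $\pi^*$. The crucial observation is that the subsidy modifies only the cost distribution; the feature distribution, the classifier family $\Theta$, and the institution's utility are unchanged. Hence the same near-realizable $\theta$ witnessing $\tpr(\theta)\geq 1-\epsilon$ and $\fpr(\theta)\leq \epsilon$ remains available under $\bar G$, and every remaining hypothesis of Theorem~\ref{thm:eqi_approx_real} ($\bar G$ strictly increasing, $L_{\bar G}$-Lipschitz, and satisfying the requisite sandwich bound at $w$) is assumed. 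Thus for any $\pi(0)\in [s,1-s]$ the theorem applied to $\Phi_{\bar G}$ yields existence of $\bar\pi = \lim_{n\to\infty} \Phi_{\bar G}^n(\pi(0))$ together with $\bar\pi \geq \bar G\bigl(w(1-\epsilon/s)\bigr)$.

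It then remains to show $\bar G\bigl(w(1-\epsilon/s)\bigr) \geq \pi^*$. Since $\pi^*$ is a fixed point of $\Phi_G = \brpi\circ\brtheta$, the best-response formula gives $\pi^* = G(w\Delta)$ where $\Delta \coloneqq \tpr(\brtheta(\pi^*)) - \fpr(\brtheta(\pi^*))$. Because $C>0$ almost surely, $G$ vanishes on $(-\infty,0]$, so the hypothesis $\pi^*>0$ forces $w\Delta>0$, i.e.\ $\Delta>0$; meanwhile $\Delta \leq 1$ is automatic from $\tpr,\fpr\in[0,1]$. Setting $x=w\Delta$ in the subsidy inequality gives $\bar G\bigl(w\Delta(1-\epsilon/s)\bigr)\geq G(w\Delta)=\pi^*$, and then monotonicity of $\bar G$ combined with $\Delta\leq 1$ and $1-\epsilon/s\geq 0$ (the nontrivial regime of Theorem~\ref{thm:eqi_approx_real}) yields
\[
\bar G\bigl(w(1-\epsilon/s)\bigr) \;\geq\; \bar G\bigl(w\Delta(1-\epsilon/s)\bigr) \;\geq\; \pi^*.
\]
Chaining this with the bound from the first step produces $\bar\pi\geq\pi^*$, which is the conclusion.

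The main obstacle I anticipate is not technical but a matter of careful bookkeeping: one must confirm that $\Delta\in(0,1]$ at any nonzero fixed point (via the vanishing of $G$ on negative costs, as above), and that the regime in which the corollary delivers a nontrivial conclusion is indeed the regime $\epsilon<s$ under which Theorem~\ref{thm:eqi_approx_real} is meaningful. Once those two facts are in place, the proof is a short reduction plus a one-line monotonicity chain, with no new quantitative analysis of the dynamics $\Phi$ required beyond what Theorem~\ref{thm:eqi_approx_real} already supplies.
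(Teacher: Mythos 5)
Your proposal is correct and follows essentially the same route as the paper: apply Theorem~\ref{thm:eqi_approx_real} to $\Phi_{\bar G}$ to get $\bar\pi \geq \bar G(w(1-\epsilon/s))$, then compare this to $\pi^* = G(w\Delta)$ with $\Delta \leq 1$ using the subsidy inequality and monotonicity. The only cosmetic difference is that the paper instantiates the hypothesis $\bar G(x(1-\epsilon/s))\geq G(x)$ at $x=w$ and uses monotonicity of $G$ (so $G(w)\geq G(w\Delta)=\pi^*$), whereas you instantiate it at $x=w\Delta$ and use monotonicity of $\bar G$ --- the two chains are equivalent.
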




\section{
Group Realizability}\label{sec:groups}

In this section, we investigate how the nature of equilibria evolves as the assumption of realizability is relaxed to allow for heterogeneity across groups. Specifically, we consider the case where there exists a perfect \drule~for each group, but not when the groups are combined. We call this ``group-realizability". \lledit{On a high level, our results illustrate that without realizability or near-realizability, the utility-maximizing \drule~can be very sensitive to the relative qualification rates in different groups, leading to the existence of \emph{multiple} equilibria --- at some of which groups experience  disparate outcomes.}

\nhedit{In sections~\ref{sec:uniform} and \ref{sec:multivariate-gaussian}, we study group-realizability under} two different and complementary settings.
The first setting considers features that are drawn from a multivariate Gaussian distribution and assumes that in each group the qualified individuals are perfectly separated from unqualified ones by a group-specific hyperplane. This is a benign setting where no group is inherently disadvantaged --- group features and performance of \drules~are symmetric up to a reparameterization of the space. The second setting considers features that are uniformly distributed scalar scores and assumes that qualified and unqualified individuals in a group are separated by a group-specific threshold, where one is higher than the other.
\nhedit{
  -
}
This model captures the natural setting where the feature (score) and \drules~inherently favor one group, e.g., SAT scores are known to be skewed by race \citep{card07racial}. 
\nhedit{We use the aforementioned} stylized \lledit{settings} to demonstrate the salient characteristics of equilibria that one might anticipate \lledit{under group-realizability}
. We find that stable equilibria tend to favor one group or the other. \nhedit{This is especially surprising in the multivariate Gaussian case where the two groups are  identical up to a change in the representation of the space.}
\nhedit{We also study the existence of balanced equilibria, where both groups acquire qualification at the same rate. We find that when balanced equilibria exist they tend} to be unstable, that is, \nhedit{no initial qualification rate (except for the balanced equilibrium itself) will converge to the balanced equilibrium under the dynamics.}

\nhedit{We consider two natural interventions in overcoming the challenges of group-realizability as outlined above. 
	As group-realizability poses even greater challenges when the costs of investment are unequally distributed \emph{between} groups, in Section~\ref{sec:subsidies} we consider the impact of subsidizing the cost of acquiring qualification for one group.
In Section~\ref{sec:decoupling}, we consider the impact of decoupling, that is, we allow the institution to use different assessment rules} for different groups assuming the group attributes are available. This is in contrast to the typical setting where institutions are constrained to using the same assessment rule across all groups, which may be the case when
data on the protected attribute is not available or when the use of protected attributes for assessment is regulated.


\subsection{Uniformly Distributed Scalar $X$}\label{sec:uniform}
\nhedit{We consider $\calX = [0,1]$, the class of assessment paramters $\Theta = [0,1]$, and assessment decision $\hat{Y}_h = \Ind{X> h}$ for all $h\in \Theta$ that represent all threshold decision policies. 
}
Consider two groups $a_1, a_2$. \nhedit{We consider $X$ to be a score that is uniformly distributed over $[0,1]$ where in group $a_i$ those with score more than $h_i$ are qualified and those with score at most $h_i$ are unqualified. \lledit{This is depicted in Figure~\ref{fig:models} (right).} Formally, we have
\begin{align*}
\Pr(X = x \mid Y = y, A = a_i) = 
\begin{cases}
\Ind{x > h_i} / (1- h_i) & \text{for } y=1 \text{ and}\\
\Ind{x \leq h_i} / h_i & \text{for } y=0\\
\end{cases}
\end{align*}
}

\begin{figure}[t]
	\setlength{\belowcaptionskip}{-6pt}
	\vspace{-1em}
	\centering
	\includegraphics[width=0.45\columnwidth, trim = 0 -30 0 0, clip]{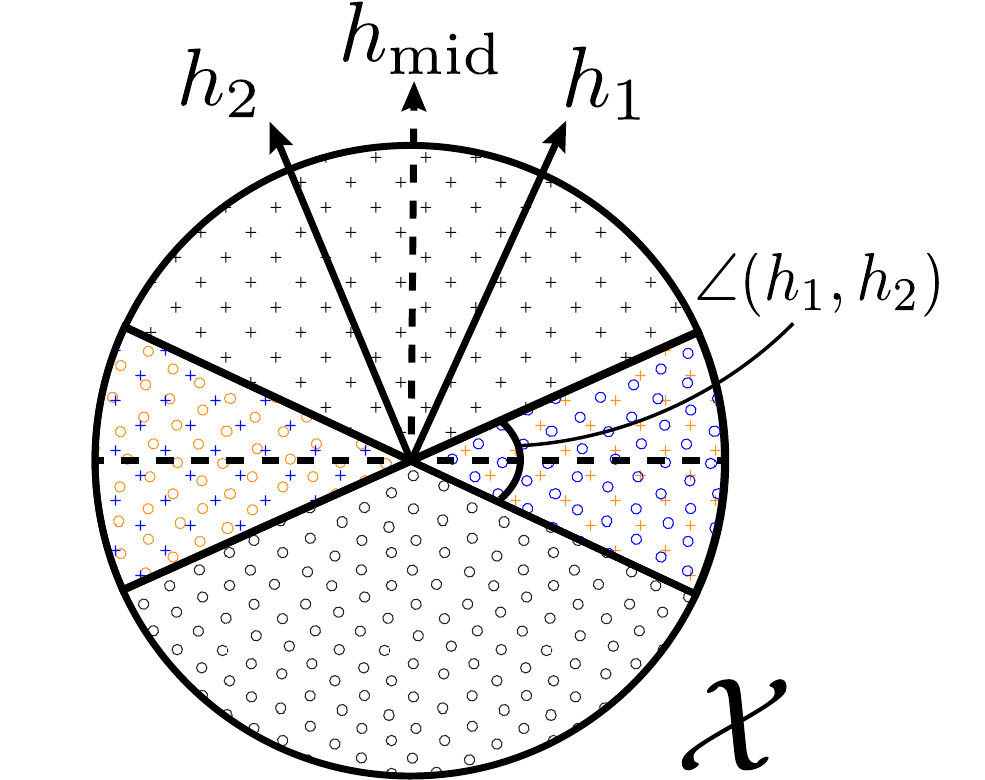}%
	\includegraphics[width=0.35\columnwidth, trim = -10 10 20 0, clip]{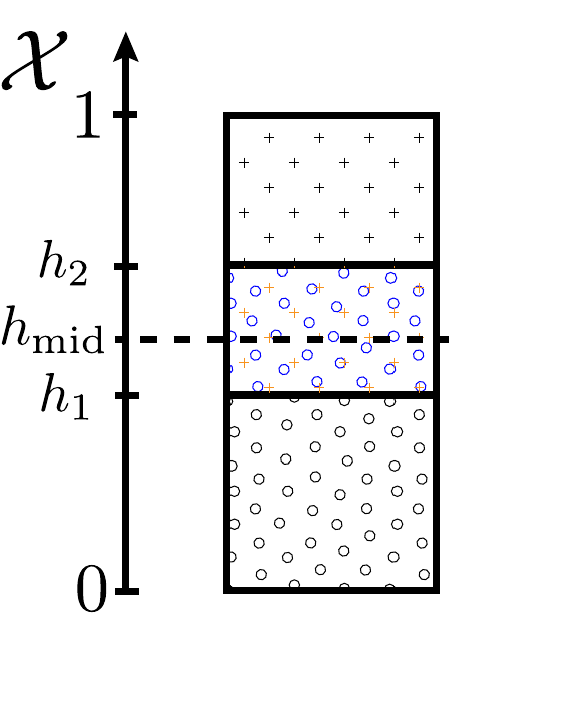}
	\includegraphics[width=0.5\columnwidth, trim = 0 0 0 0, clip]{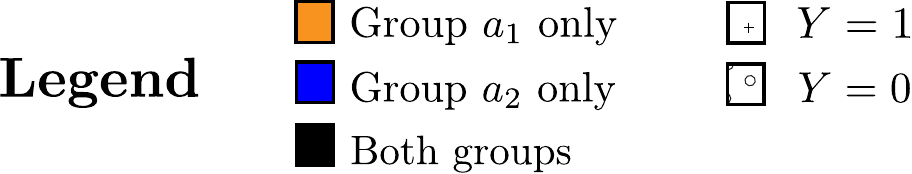}
	\caption{Equilibria in Multivariate Gaussian case (left) Uniform case (right)}\label{fig:models}
\end{figure}

	\begin{assumption}\label{ass:unif}We assume $n_{a_1} \cdot \ptp = n_{a_2}  \cdot \cfp$. \lledit{We also assume, for simplicity that the cost for acquiring qualifications is uniformly distributed on $[0,1]$ (i.e. $G(c) = c$) in both groups}; our results, however, generalize to the setting where the CDF for the cost $G:[0,1]\to [0,1]$ is an arbitrary strictly increasing function. 
	\end{assumption}
\nhedit{We show that} \awedit{when $w$ is in a certain range}, 
there are two \nhedit{\emph{unbalanced stable}} equilibria corresponding to assessment parameters $h_1$ or $h_2$, which respectively lead to the \optinvest for groups $a_1$ or $a_2$ but low qualification rate for the other group.
\awedit{There is also a more} \nhedit{\emph{balanced} but \emph{unstable} equilibrium at some threshold $\hm$ between $h_1$ and $h_2$. 
Outside of this range of $w$, there is only one equilibrium in which one of the groups achieves its \optinvest. These findings are summarized in} the following two propositions.

	\begin{proposition}\label{prop:uniform-eq} Define $g \coloneq
	\frac{(1-h_1)(-wh_2^2+h_2(1-h_1)-wh_1(1-h_1))}{w((1-h_1)^2-h_2^2))}.$ Note that $g\in (0, h_2-h_1)$ for any $w$. Let $w \in \left(w_l, w_u \right)$ where 
	\begin{align}\label{eq:w-bounds}
	w_{l}:=\frac{(1-h_1)^2}{(1-h_2)h_2+(1-h_1)^2},~ w_u := \frac{h_2(1-h_1)}{h_2^2+h_1(1-h_1)}.
	\end{align}
	Given Assumption~\ref{ass:unif}, there exists \nhedit{two} stable equilibria at
\nhedit{
	\begin{align}
	h&=h_1, & \pi_{a_1} &= w, &   \pi_{a_2} &= w \cdot \frac{h_1}{h_2} , \text{ and} \label{eq:h1-uniform} \\
	h &= h_2, & \pi_{a_1} &= w\cdot \frac{1-h_2}{1-h_1}, & \pi_{a_2} &= w, \label{eq:h2-uniform}.
	\end{align}
	}
	and \nhedit{a unique non-zero} unstable equilibrium at
	\begin{equation*}
	h=\hm :=h_1 + g,\quad	\pi_{a_1} = w\cdot \frac{1-h_1-g}{1-h_1},\quad \pi_{a_2} =w\cdot\frac{h_1+g}{h_2}. 
	\end{equation*}
	In addition, when $w = 1 - h_1$ the unstable equilibrium is fully balanced. 
	\end{proposition}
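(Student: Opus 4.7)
My plan is to reduce the proposition to a piecewise computation of the institution's utility and the individual best response, exploiting the fact that both $\brtheta$ and $\brpi$ take only a handful of values in this uniform setting. I would first compute $\tpr_{a_i}(h)$ and $\fpr_{a_i}(h)$ for $h$ in each of the three regions $[0,h_1]$, $[h_1,h_2]$, $[h_2,1]$ using the uniform score distributions; the gap $\tpr_{a_i}(h)-\fpr_{a_i}(h)$ comes out piecewise linear, attaining its maximum value of $1$ at $h=h_i$ and decreasing linearly on either side. Plugging these into $\brpi_{a_i}(h)=w(\tpr_{a_i}(h)-\fpr_{a_i}(h))$, which is valid because $G$ is the identity on $[0,1]$, immediately recovers the advertised rates $\pi_{a_1}=w,\ \pi_{a_2}=wh_1/h_2$ at $h=h_1$ and symmetrically at $h=h_2$.

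The second step is to characterize the institution's best response. Substituting the piecewise rates into the utility in (\ref{eq:instit_dec}) and invoking $n_{a_1}\ptp=n_{a_2}\cfp$, one checks that $U(h)$ is piecewise linear in $h$: strictly increasing on $[0,h_1]$ (only the $\fpr$ terms are active and both decrease in $h$), strictly decreasing on $[h_2,1]$ (only the $\tpr$ terms are active and both decrease in $h$), and linear on $[h_1,h_2]$ with slope proportional to $(1-\pi_2)(1-h_1)-\pi_1 h_2$. Consequently $\brtheta(\pi)\in\{h_1,h_2\}$ generically, and the entire interval $[h_1,h_2]$ is optimal exactly when this middle slope vanishes.

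With this structure in hand, verifying the three equilibria is essentially algebraic substitution. Plugging $\pi=(w,wh_1/h_2)$ into the middle-interval slope expression and imposing $U(h_1)\ge U(h_2)$ simplifies to a single inequality in $w,h_1,h_2$ whose boundary is exactly the formula for $w_u$; symmetrically, the $h_2$-equilibrium condition collapses to the bound $w_l$. For the middle equilibrium, I impose the zero-slope condition with $\pi_1=w(1-h)/(1-h_1)$ and $\pi_2=wh/h_2$, producing a linear equation in $h$ whose unique solution is $\hm=h_1+g$ with $g$ as stated; the range of $w$ for which $g\in(0,h_2-h_1)$ then matches the admissibility interval. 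The balance claim is then immediate: setting $\pi_1=\pi_2$ in the rates induced by $\hm$ yields, after cancellation, $w(1-h_1+h_2)=(1-h_1)(1-h_1+h_2)$, i.e.\ $w=1-h_1$.

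Finally, stability follows from the discrete structure of $\Phi=\brpi\circ\brtheta$: a sufficiently small perturbation of the fixed point at $(w,wh_1/h_2)$ does not flip the sign of the middle-interval slope, so $\brtheta$ still returns $h_1$, and the next iterate of $\Phi$ snaps back to the same point; an analogous argument gives stability of the $h_2$-equilibrium. At $\hm$, by contrast, the middle slope is exactly zero by construction, so any perturbation of $\pi$ tips it strictly positive or negative and the next application of $\Phi$ jumps to one of the two boundary equilibria, showing $\hm$ is unstable. The step I expect to require the most care is tracking the direction of the middle-interval slope inequality consistently across the three candidates so that $w_u$ and $w_l$ correspond to the correct fixed points, and verifying that the regime in which the two boundary equilibria coexist corresponds to the claimed interval (which, after simplification, amounts to $h_1+h_2>1$ so that the interval between $w_u$ and $w_l$ is nonempty).
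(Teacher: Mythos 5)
Your proposal is correct and follows essentially the same route as the paper: your ``middle-interval slope'' condition $(1-\pi_{a_2})(1-h_1)-\pi_{a_1}h_2 \lessgtr 0$ is exactly the paper's comparison of the marginal true-positive gain density against the false-positive loss density on $[h_1,h_2]$, and the verification of the three fixed points, the balance condition $w=1-h_1$, and the perturbation argument for (in)stability all match. You also correctly flag the one point that genuinely needs care: the two threshold conditions come out as $w>\frac{h_2(1-h_1)}{h_2^2+h_1(1-h_1)}$ and $w<\frac{(1-h_1)^2}{(1-h_2)h_2+(1-h_1)^2}$ (so the roles of the stated ``$w_l$'' and ``$w_u$'' are swapped relative to the proposition's labels, consistent with the appendix), with nonemptiness of the interval equivalent to $h_1+h_2>1$.
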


\begin{proposition}
Given Assumption~\ref{ass:unif} when $w <w_l$ there exists one stable equilibrium \nhedit{defined by Equation~\ref{eq:h2-uniform}}, and when $w >w_u$ there exists one stable equilibrium \nhedit{defined by Equation~\ref{eq:h1-uniform}.}
\end{proposition}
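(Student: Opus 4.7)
The plan is to verify the fixed-point and stability conditions for the two candidate equilibria and rule out competing equilibria in the regimes $w < w_l$ and $w > w_u$, reusing the utility decomposition underlying Proposition~\ref{prop:uniform-eq}. First I would analyze $U(h;\pi) = \ptp \sum_a \tpr_a(h)\pi_a n_a - \cfp \sum_a \fpr_a(h)(1-\pi_a) n_a$ as a piecewise linear function of $h$ with breakpoints at $h_1$ and $h_2$. On $[0,h_1]$ the slope is strictly positive for every $\pi$ (both TPRs are pinned at $1$ while both FPRs decrease in $h$), and on $[h_2,1]$ the slope is strictly negative (both FPRs vanish while both TPRs decrease), so the institution's best response must lie in $[h_1,h_2]$. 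On this middle segment, Assumption~\ref{ass:unif} (setting $\alpha := n_{a_1}\ptp = n_{a_2}\cfp$) collapses the slope to $\alpha\bigl[-\pi_1/(1-h_1) + (1-\pi_2)/h_2\bigr]$, whose sign (positive, negative, or zero) determines whether $\brtheta(\pi)$ equals $h_2$, $h_1$, or some interior point tied with the endpoints.

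Next I would verify the claimed fixed points by substituting the two natural candidates $\pi_{h_1} := \brpi(h_1) = (w,\, wh_1/h_2)$ and $\pi_{h_2} := \brpi(h_2) = \bigl(w(1-h_2)/(1-h_1),\, w\bigr)$ into this slope formula. At $\pi_{h_1}$ the slope is, after clearing denominators, a positive constant multiple of $h_2(1-h_1) - w\bigl[h_2^2 + h_1(1-h_1)\bigr]$, which is strictly negative precisely when $w > w_u$; in that regime $\brtheta(\pi_{h_1}) = h_1$ and $\pi_{h_1}$ is a fixed point. At $\pi_{h_2}$ the slope is a positive multiple of $(1-w)(1-h_1)^2 - w h_2(1-h_2)$, strictly positive iff $w < w_l$, so $\pi_{h_2}$ is a fixed point in that regime. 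To rule out competing equilibria, note that for $w > w_u$, evaluating the middle-segment slope at $\pi_{h_2}$ is strictly negative (since $w > w_u > w_l$), so $\brtheta(\pi_{h_2}) = h_1 \neq h_2$ and $\pi_{h_2}$ is not a fixed point; a symmetric argument eliminates $\pi_{h_1}$ for $w < w_l$. The interior equilibrium $\hm$ from Proposition~\ref{prop:uniform-eq}, which by direct rearrangement equals $h_2\bigl[(1-h_1)^2 - w h_2\bigr]\big/\bigl[w\bigl((1-h_1)^2 - h_2^2\bigr)\bigr]$, lies in $(h_1,h_2)$ iff $w \in (w_l,w_u)$, so no interior equilibrium exists in either regime of interest.

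Local stability of the surviving equilibrium is then immediate from continuity: the middle-segment slope is continuous in $\pi$ and strictly nonzero at the equilibrium, so $\brtheta$ is constant on a neighborhood of it, making $\Phi = \brpi \circ \brtheta$ collapse that entire neighborhood to the fixed point in a single iterate. The main obstacle is the algebraic bookkeeping: the two slope-sign inequalities, evaluated at two distinct candidate investment profiles, must decouple cleanly into the separate thresholds $w_l$ and $w_u$, which is exactly what Assumption~\ref{ass:unif} buys by eliminating the asymmetric coefficients $n_{a_1}\cfp$ and $n_{a_2}\ptp$ from the middle-segment slope. Beyond this, one only needs to check that the strict-sign behavior on the two outer segments leaves no room for exotic boundary equilibria.
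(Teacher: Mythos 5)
Your reduction of the institution's best response to the sign of the middle-segment slope $-\pi_{a_1}/(1-h_1)+(1-\pi_{a_2})/h_2$, and your verification that the candidate \eqref{eq:h1-uniform} is a fixed point iff $w>w_u$ while \eqref{eq:h2-uniform} is a fixed point iff $w<w_l$, are correct and are essentially the same computation the paper performs (its conditions \eqref{eq:h1} and \eqref{eq:h2} are exactly your two slope-sign inequalities, and stability follows as you say from strictness of the inequality). That part establishes everything the proposition, read as an existence statement, actually asserts.

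The genuine problem is in your rule-out step, which rests on the ordering $w_u > w_l$ (you write ``since $w > w_u > w_l$''). This ordering is backwards in the regime the paper works in: a direct comparison shows $w_u < w_l$ if and only if $h_2 > 1-h_1$, which is the standing assumption used throughout Appendix~\ref{app:proof-uniform} (and is forced anyway, since otherwise the interval of Proposition~\ref{prop:uniform-eq} on which \emph{both} stable equilibria coexist would be empty). Consequently, for $w$ in the overlap between $w_u$ and $w_l$, the slope at $\pi_{h_2}$ is still strictly \emph{positive}, so $\pi_{h_2}$ remains a fixed point even though $w>w_u$; likewise $\pi_{h_1}$ remains a fixed point for such $w<w_l$, and your own formula for $\hm$ shows the interior unstable equilibrium also lies in $(h_1,h_2)$ exactly on that overlap. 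So the claims ``$\pi_{h_2}$ is not a fixed point for $w>w_u$'' and ``no interior equilibrium exists in either regime of interest'' are false there and directly contradict Proposition~\ref{prop:uniform-eq}. The correct uniqueness statement is the complementary one: $\pi_{h_1}$ is eliminated only when $w\le w_u$, and $\pi_{h_2}$ only when $w\ge w_l$, so exactly one stable equilibrium survives outside the coexistence interval, not on all of $\{w<w_l\}$ or all of $\{w>w_u\}$. Your argument goes through once the roles of the two thresholds in the rule-out step are swapped accordingly.
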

The details of the proofs are \nhedit{presented} in Appendix \ref{app:proof-uniform}.
\nhedit{At a high level,}  
if the wage is not too low or too high, \lledit{both thresholds $h_1$ and $h_2$ correspond to stable equilibria, at which either group $a_1$ or $a_2$ is perfectly classified.} The equilibrium corresponding to $\hm$, where the classifier has the same true positive and false positive rates in both groups, is unstable and subsequently harder to achieve. 

\lledit{In Table~\ref{tab:uniform-short}, we compare these equilibria in terms of metrics introduced in Section~\ref{sec:model}, under the assumptions of Proposition~\ref{prop:uniform-eq}. We use standard notation $\succ$ and $\sim$ to denote preference and indifference respectively. For example, 
we find that in terms of balance in qualification rates, the stable equilibrium associated with $h_1$ is more balanced that the stable equilibrium associated with $h_2$, but both are always less balanced unstable equilibrium associated with $\hm$. }Details of the computation are deferred to Table \ref{tab:uniform} in Appendix \ref{app:proof-uniform}.
\begin{table}[htbp!]
	\begin{center}
		\scalebox{0.8}{
		\begin{tabular}{| p{5cm} |  p{4.2cm}|} 
			\hline
			&  \thead{Ranking of Equilibria}  \\ [0.5ex] 
			\hline
			\makecell{Stability\\\quad}  & \makecell{$h_1, h_2$ are stable. \\$\hm$ is unstable} \\ 
			\hline
			\makecell{Qualification rate of group $a_1$}   & $\makecell{h_1 \succ \hm \succ h_2}$  \\ 
			\hline
			\makecell{Qualification rate of group $a_2$}   & $\makecell{h_2 \succ \hm\succ h_1}$   \\ 
			\hline
			\makecell{Balance of qualification rates } & $\makecell{\hm\succ h_1 \succ h_2}$  \\ 
		
			\hline
			\makecell{Institution's Utility} &  \makecell{no ranking} \\ [.5ex] 
			\hline
		\end{tabular}}
	\end{center}
	\caption{Comparison of equilibria for uniform features. In this table we refer to each equlibria using the associated threshold decision policy.}\label{tab:uniform-short}
\end{table}
\subsection{Multivariate Gaussian $X$} \label{sec:multivariate-gaussian}

\awedit{We consider $\calX = \R^d$ and
$\Theta = S_{d-1}$, where $S_{d-1}$ is the set of $d$-dimensional unit vectors. Let $\hat{Y}_h = \Ind{X\trans h \geq 0}$ for all $h\in \Theta$ denote separating hyperplane policies  and} \nhedit{ $\angle_{h,h'} \coloneqq \frac{1}{\pi}\arccos(\frac{h\trans h'}{\norm{h}\norm{h'}})$} denote the angle between two vectors, normalized by the constant~$\pi$.
\nhedit{We consider two groups $a_1$ and $a_2$ associated respectively with vectors $h_1$ and $h_2$, such that $\angle_{h_1, h_2} \neq 0$. We assume the groups have equal size, i.e., $n_{a_1} = n_{a_2}$.
For each group the feature distribution forms a $d$-dimensional spherical  Gaussian centered at the origin such that the qualified individuals are in halfspace $\Ind{X\trans h_i \geq 0}$ and the unqualified individuals in halfspace $\Ind{X\trans h_i < 0}$. Formally,} for $x \in \R^d$ and  $i \in \{1,2\}$,
\nhedit{
\begin{align*}
	\Pr(X = x \mid Y=y, A=a_i) = \begin{cases}
	2\phi(x)  \Ind{x\trans h_i \ge 0} & \text{for } y=1 \text{ and}\\
	2\phi(x)  \Ind{x\trans h_i < 0} & \text{for } y=0,
	\end{cases}
\end{align*} 
where $\phi(x)$ is the density of the spherical $d$-dimensional Gaussian. \lledit{This is depicted in Figure~\ref{fig:models}~(left).}
}
\begin{assumption}\label{assmp:mvgaussian}
			We assume that the CDF for the cost of \lledit{acquiring qualifications} is a strictly increasing function $G:[0,1]\to [0,1]$ and is  \lledit{the same in both groups}.
\end{assumption}

\nhedit{As we will see, the relative gain and loss of the institution for respectively accepting a qualified or unqualified individual, that is $\ptp / \cfp$,  plays a role in the nature of the equilibria.
The following proposition characterizes the equilibria when this value is strictly positive, that is,} when the benefit of true positives outweighs the cost of false positives. \lledit{Surprisingly similar to the previous setting of uniform scores, the current setting also has two stable equilibria that each favor one group at the expense of the other, as well as a balanced equilibrium that is unstable.}

\begin{proposition}\label{prop:gaussian-eq}
	Given Assumption \ref{assmp:mvgaussian} and
$\ptp > \cfp$, there exists two stable equilibria, at
\begin{align*}
	h &=h_1,  &     \pi_{a_1} &= G(w)   & \pi_{a_2} &= G\left( w \cdot (1-2 \angle_{h_1,h_2} )\right) , \\
	h &= h_2, &     \pi_{a_1} &= G\left(w (1-2 \angle_{h_1,h_2} )\right)   & \pi_{a_2} &= G(w).
\end{align*}
There is a unique non-zero unstable equilibrium at 
\begin{align*}
	h &= \hm , &  \pi_{a_1} &=  G\left(w(1- \angle_{h_{1},h_2} )\right) & \pi_{a_2} &= G\left(w(1-\angle_{h_{1},h_2})\right),
\end{align*}
where $\hm := \frac{h_1+h_2}{\norm{h_1+h_2}}$.
\end{proposition}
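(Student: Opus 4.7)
The plan is to exploit the spherical symmetry of the Gaussian to reduce TPR and FPR to simple angular expressions, derive a clean form for the institution's best response, and then verify the three claimed fixed points together with their stability. Throughout, I abbreviate $\alpha := \angle_{h_1,h_2}$ and $\alpha_i(h) := \angle_{h,h_i}$.

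First I would compute $\tpr_{a_i}(h)$ and $\fpr_{a_i}(h)$ by projecting the spherical Gaussian onto the plane $\mathrm{span}(h,h_i)$. A spherical Gaussian is rotationally invariant, so in that plane the angle of $X$ is uniform on $[0,2\pi)$. The half-plane $x^\top h_i\ge 0$ is an arc of length $\pi$ and intersects the half-plane $x^\top h\ge 0$ in an arc of length $\pi(1-\alpha_i(h))$; dividing by the conditioning probability $1/2$ yields $\tpr_{a_i}(h)=1-\alpha_i(h)$, and a complementary computation gives $\fpr_{a_i}(h)=\alpha_i(h)$. The individual best response from \eqref{eq:individ_dec} then simplifies to $\brpi_{a_i}(h)=G\bigl(w(1-2\alpha_i(h))\bigr)$.

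Next I would simplify the institution's objective in \eqref{eq:instit_dec}. After substituting the angular TPR/FPR, dropping terms independent of $h$, and using $n_{a_1}=n_{a_2}$, the utility reduces (up to a positive constant) to minimizing $\lambda_1\alpha_1(h)+\lambda_2\alpha_2(h)$ with $\lambda_i:=\ptp\pi_{a_i}+\cfp(1-\pi_{a_i})$. I would rewrite this as $\min(\lambda_1,\lambda_2)\bigl(\alpha_1(h)+\alpha_2(h)\bigr)+|\lambda_1-\lambda_2|\,\alpha_{j}(h)$, where $j:=\argmax_i \lambda_i$, and invoke the spherical triangle inequality $\alpha_1(h)+\alpha_2(h)\ge\alpha$ (with equality iff $h$ lies on the short geodesic between $h_1$ and $h_2$) to deduce that the minimizer sits on this geodesic; along the geodesic, the second term further forces $h=h_j$ whenever $\lambda_1\neq\lambda_2$. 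Since $\ptp>\cfp$, the sign of $\lambda_1-\lambda_2$ matches that of $\pi_{a_1}-\pi_{a_2}$, so $\brtheta(\pi)=h_1$ when $\pi_{a_1}>\pi_{a_2}$, $\brtheta(\pi)=h_2$ when $\pi_{a_1}<\pi_{a_2}$, and $\brtheta(\pi)$ is any point on the geodesic (determined by tie-breaking) when $\pi_{a_1}=\pi_{a_2}$.

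Finally I would verify each equilibrium and its stability. Plugging $h=h_1$ into $\brpi$ yields $\pi_{a_1}=G(w)$ and $\pi_{a_2}=G(w(1-2\alpha))$; since $G$ is strictly increasing and $\alpha>0$, this gives $\pi_{a_1}>\pi_{a_2}$, so $\brtheta$ returns $h_1$ and we have a fixed point. A symmetric argument handles the $h_2$ equilibrium. Both are stable because a sufficiently small perturbation preserves the strict ordering between $\pi_{a_1}$ and $\pi_{a_2}$, keeps $\brtheta$ pinned, and a single iterate restores the exact coordinates. For the balanced fixed point, $\pi_{a_1}=\pi_{a_2}$ combined with injectivity of $G$ forces $\alpha_1(h)=\alpha_2(h)$; intersecting with the geodesic singles out $h=\hm$ with common rate $G(w(1-\alpha))$, and the tie-breaking rule must select $\hm$ at this balanced state for $\hm$ to be a fixed point. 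Instability of $\hm$ follows because any infinitesimal imbalance between $\pi_{a_1}$ and $\pi_{a_2}$ flips $\brtheta$ to $h_1$ or $h_2$, pushing the iterate toward one of the unbalanced equilibria. The main technical obstacle is the sphere-geometric argument that the weighted angular objective is minimized at a vertex $h_1$ or $h_2$; everything else is bookkeeping and consistency checks against the piecewise-constant institutional best response.
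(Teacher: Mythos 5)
Your proposal is correct and follows essentially the same route as the paper's proof: reduce $\tpr$ and $\fpr$ to normalized angles via rotational invariance, show the institution's best response is $h_1$, $h_2$, or a point on the geodesic according to the sign of $\pi_{a_1}-\pi_{a_2}$, and then verify consistency and stability of each fixed point. Your only departure is cosmetic but welcome: the decomposition $\min(\lambda_1,\lambda_2)\bigl(\alpha_1(h)+\alpha_2(h)\bigr)+|\lambda_1-\lambda_2|\,\alpha_j(h)$ makes explicit the step the paper dismisses as ``it is clear that,'' since both summands are simultaneously minimized at $h=h_j$.
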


\nhedit{Let us briefly comment on the high level proof idea} and defer the full argument to Appendix \ref{app:proof-mvgaussians}. 
Since $\ptp > \cfp$, the institution cares more about accepting true positives than avoiding false positives.
Therefore, the utility-maximizing $h$ is determined by the group that has \nhedit{a higher qualification rate} and thus has a higher fraction of positives --- this is $h_1$ (resp. $h_2$) whenever $\pi_{a_1} > \pi_{a_2}$ (resp. $\pi_{a_1} < \pi_{a_2}$). When qualification rates  are equal between the two groups, the institution maximizes its utility at any $h$ that is a convex combination of $h_1$ and $h_2$, but the unique $h$ that would induce equal \nhedit{qualification rate} is $h = \hm$, where the classifier has the same true positive and false positive rates in both groups.

An unfortunate implication of this result is that the dynamics will always converge to an unbalanced \nhedit{qualification rate}, except when the initial levels of investment are exactly the same. Even though a fully balanced equilibrium exists, it is unstable and therefore not robust to small \awedit{perturbations in either the \nhedit{qualification rates} or the classifier}, which in practice is unavoidable given sampling noise.

In Table~\ref{tab:gaussian-short}, we compare these equilibria in terms of metrics introduced in Section~\ref{sec:model}. 
For example, 
we find that in terms of institutional utility, the stable equilibria associated with $h_1$ and $h_2$ are equally preferred, and are both strictly preferred to the unstable equilibrium associated with $\hm$. This implies that the institution has no incentive at all to keep the dynamics at the unstable equilibrium, even though it induces balanced investment. Exact values are deferred to Table~\ref{app:tab:gaussian} in Appendix~\ref{app:proof-mvgaussians}.

\begin{table}[htbp]
	\begin{center}
	\scalebox{0.8}{
		\begin{tabular}{| p{5cm} |  p{4.2cm}|} 
			\hline
			&  \thead{Ranking of Equilibria}  \\ [0.5ex] 
			\hline
			\makecell{Stability\\\quad}  & \makecell{$h_1, h_2$ are stable. \\$\hm$ is unstable} \\ 
			\hline
			\makecell{Qualification rate of group $a_1$}   & $\makecell{h_1 \succ \hm \succ h_2}$  \\ 
			\hline
			\makecell{Qualification rate of group $a_2$}   & $\makecell{h_2 \succ \hm\succ h_1}$   \\ 
			\hline
			\makecell{Balance of qualification rate} & $\makecell{\hm\succ h_1 \sim h_2}$  \\ 
		
			\hline
			\makecell{Institution's Utility} &  $\makecell{h_1\sim h_2 \succ \hm}$ \\ [.5ex] 
			\hline
		\end{tabular}}
	\end{center}
	\caption{Comparison of equilibria for Multivariate Gaussian features. In this table we refer to each equlibria using the associated hyperplane.}\label{tab:gaussian-short}
\end{table}

\awedit{Interestingly, when $\ptp<\cfp$, there are no stable equilibria; instead there exists a stable limit cycle between $h_1$ and $h_2$. This is stated informally in the following proposition.
\begin{proposition}\label{prop:gaussian-limit}
	Given Assumption \ref{assmp:mvgaussian} and
	\nhedit{$\ptp < \cfp$,} there exists no stable equilibria. Instead there exists a limit cycle and one non-trivial unstable equilibrium. 
\end{proposition}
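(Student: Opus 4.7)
The plan is to mirror the structure used for Proposition~\ref{prop:gaussian-eq} but with the sign of $\ptp-\cfp$ flipped, then trace the orbit of $\Phi = \brpi\circ\brtheta$ to exhibit an oscillation. First I would write the institution's utility, using $\tpr_{a_i}(h) = 1-\angle_{h,h_i}$ and $\fpr_{a_i}(h) = \angle_{h,h_i}$ (both of which follow from the spherical symmetry of the Gaussian via the standard bivariate-normal orthant formula), as
\begin{equation*}
U(h;\pi) \;\propto\; \sum_{i=1,2}\Bigl[\ptp\,\pi_{a_i} - \angle_{h,h_i}\bigl((\ptp-\cfp)\pi_{a_i}+\cfp\bigr)\Bigr].
\end{equation*}
Since $\angle_{h,\cdot}$ is convex along the geodesic on the sphere, the maximization reduces to minimizing $\alpha_1\angle_{h,h_1}+\alpha_2\angle_{h,h_2}$ on the shorter arc from $h_1$ to $h_2$, where $\alpha_i\coloneqq(\ptp-\cfp)\pi_{a_i}+\cfp$. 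Under $\ptp<\cfp$, $\alpha_i$ is strictly decreasing in $\pi_{a_i}$; hence $\brtheta(\pi)=h_1$ iff $\pi_{a_1}<\pi_{a_2}$, $\brtheta(\pi)=h_2$ iff $\pi_{a_1}>\pi_{a_2}$, and the tie is broken to $\hm$. This selection rule is exactly the opposite of the one in Proposition~\ref{prop:gaussian-eq}.

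Next I would compute $\brpi(h_1) = (G(w),\,G(w(1-2\angle_{h_1,h_2})))$ and $\brpi(h_2)=(G(w(1-2\angle_{h_1,h_2})),\,G(w))$. Starting from any state $\pi$ with $\pi_{a_1}>\pi_{a_2}$, one step of $\Phi$ sends $\pi\mapsto\brpi(h_2)$, which satisfies $\pi_{a_1}<\pi_{a_2}$ strictly because $G$ is strictly increasing and $\angle_{h_1,h_2}>0$. A further step of $\Phi$ yields $\brpi(h_1)$, which has $\pi_{a_1}>\pi_{a_2}$ again. Hence $\Phi$ swaps these two points, producing a period-two orbit. Because any imbalanced initial condition lands on this orbit after a single application of $\Phi$, the cycle is a stable limit cycle in the sense of the paper.

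It remains to identify the unique non-trivial fixed point and show it is unstable. If $\brtheta(\pi^*)=h_1$, then $\pi^*=\brpi(h_1)$ has $\pi^*_{a_1}>\pi^*_{a_2}$, which by the selection rule forces $\brtheta(\pi^*)=h_2$, a contradiction; the case $\brtheta(\pi^*)=h_2$ is symmetric. So $\brtheta(\pi^*)=\hm$, which by the tie-breaking requires $\pi^*_{a_1}=\pi^*_{a_2}$; combined with $\pi^*=\brpi(\hm)=\bigl(G(w(1-\angle_{h_1,h_2})),G(w(1-\angle_{h_1,h_2}))\bigr)$, strict monotonicity of $G$ pins down the unique balanced equilibrium $\pi^*_{a_1}=\pi^*_{a_2}=G(w(1-\angle_{h_1,h_2}))$. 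Any perturbation breaking the tie, say $\pi^*_{a_1}>\pi^*_{a_2}$ by an arbitrarily small $\delta>0$, pushes $\brtheta$ to $h_2$ and the orbit immediately lands on the limit cycle, so $\pi^*$ is unstable.

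The main obstacle is the careful handling of the tiebreaker at $\pi_{a_1}=\pi_{a_2}$: the balanced fixed point exists precisely because the fixed tiebreaking function selects $\hm$ on the diagonal, and this must be stated explicitly because $\brtheta$ is otherwise multi-valued there. A secondary point to verify is that the optimum of $\alpha_1\angle_{h,h_1}+\alpha_2\angle_{h,h_2}$ on the full unit sphere is indeed attained on the shorter arc between $h_1$ and $h_2$, not somewhere antipodal; this follows from the fact that any $h$ outside that arc strictly increases at least one of the angles. Everything else is straightforward algebra plus the observation that off the diagonal $\Phi$ behaves like a deterministic two-state map on $\{\brpi(h_1),\brpi(h_2)\}$.
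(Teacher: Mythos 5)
Your proposal is correct and follows essentially the same route as the paper: flip the sign analysis of Proposition~\ref{prop:gaussian-eq} to get the reversed best-response rule ($\brtheta=h_2$ when $\pi_{a_1}>\pi_{a_2}$, etc.), observe that $\brpi(h_1)$ and $\brpi(h_2)$ swap under $\Phi$ to give the two-point cycle, and place the unique unstable balanced equilibrium at $\hm$ via the tie-breaking rule. Your treatment is in fact a bit more careful than the paper's on two points it glosses over --- the explicit reduction to minimizing $\alpha_1\angle_{h,h_1}+\alpha_2\angle_{h,h_2}$ with $\alpha_i>0$ (which cleanly justifies that the optimum lies on the minor arc) and the uniqueness argument for the non-trivial fixed point --- and your $G(w(1-2\angle_{h_1,h_2}))$ is the correct expression where the paper's proof drops a factor of $2$.
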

Intuitively, the cycle is caused by misaligned incentives between the institution and the individuals. As the institution finds false positives more costly than false negatives, it prefers the hyperplane that classifies more false positives correctly. At each time step, it will choose the hyperplane associated with the group that has a lower \nhedit{qualification rate}, prompting that group to invest more in the next time step.} \lledit{It is striking that even in an simple model involved multivariate Gaussian distributions, a range of limiting behavior is possible for the dynamics in the group-realizable setting. In Section~\ref{sec:experiments}, we also observe the existence of limit cycles in simulations with real data distributions.}


\subsection{Different Costs of Investment by Group}\label{sec:subsidies}

Thus far we have assumed that all groups have the same distribution of the cost of investment, $G$. In reality, the cost of investment may be distributed differently in each group; a disadvantaged group might on average experience higher \nhedit{(monetary or opportunity)} costs. 
\nhedit{For example, 
low income families who may have to take out loans to pay for college tuition incur high interest rates.
}
This is a compelling setting that reflects deep-seated disparities in access to opportunity between demographic groups in the real world; an analogous setting has been considered by works on strategic classification, where the costs \nhedit{for} manipulating features is posited to differ across groups \citep{Hu2019disparate,Milli2019social}.

In this section, we consider the ramifications of \nhedit{differences in investment cost} across groups, focusing on the setting of Section~\ref{sec:multivariate-gaussian}. We show that the disadvantage from having higher costs is amplified under group-realizability. \awedit{Specifically, suppose} that group $a_1$ (resp. $a_2$) has costs distributed according to cumulative distribution function $G_1$ (resp. $G_2$), and that group $a_1$ is disadvantaged in terms of costs. The following result observes that if $G_1$ sufficiently dominates $G_2$, then there exists no stable equilibrium that encourages optimal investment from group $a_1$ and no equilibrium that is balanced for both groups, in sharp contrast to the characterization in Proposition~\ref{prop:gaussian-eq}. The proof is deferred to Appendix~\ref{app:proof-subsidies}.

\begin{proposition}\label{prop:unequal_costs_eq}
\nhedit{Consider the multi-variate Gaussian setting of Section~\ref{sec:multivariate-gaussian}.}
Suppose $G_1$ and $G_2$ are such that $G_1(w) < G_2(w(1-2\angle_{h_{1},h_2}))$, then there exists a single non-trivial equilibrium at $h_2$, which is also stable. The level of investment by group $a_1$ (resp. $a_2$) is $G_1(w(1-2\angle_{h_{1},h_2})$ (resp.~$G_2(w)$).
\end{proposition}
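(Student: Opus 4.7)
The plan is to leverage the characterization of the institution's best response from the proof of Proposition~\ref{prop:gaussian-eq} and then use the strict ordering of qualification rates forced by the hypothesis to pin down the unique non-trivial equilibrium and establish its stability. Throughout, I work under the implicit assumption $\ptp > \cfp$ that underlies Proposition~\ref{prop:gaussian-eq}; without it Proposition~\ref{prop:gaussian-limit} already indicates limit cycles instead of equilibria, so the question does not arise.

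First I would recall two ingredients from Section~\ref{sec:multivariate-gaussian}. (i)~By the symmetry of the Gaussian about the origin, the individual best response from group $i$ to any hyperplane $h$ is $G_i(w(1 - 2\angle_{h_i, h}))$, since $\tpr_i(h) - \fpr_i(h) = 1 - 2\angle_{h_i, h}$. (ii)~The utility-maximizing hyperplane $\brtheta(\pi)$ equals $h_1$ whenever $\pi_{a_1} > \pi_{a_2}$, equals $h_2$ whenever $\pi_{a_2} > \pi_{a_1}$, and is only indifferent between (convex combinations of) the two, including $\hm$, when $\pi_{a_1} = \pi_{a_2}$. Crucially, (ii)~depends only on the qualification rates, not on the cost CDFs, so it transfers verbatim to the present setting.

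Next I would verify that $\pi^\ast = (G_1(w(1-2\angle_{h_1,h_2})),\, G_2(w))$ together with $\brtheta(\pi^\ast)=h_2$ is a fixed point. The individuals' best responses to $h_2$ produce exactly these rates by~(i). To confirm $\brtheta(\pi^\ast)=h_2$ it suffices to show $\pi^\ast_{a_1} < \pi^\ast_{a_2}$. Using monotonicity of both CDFs and the hypothesis $G_1(w) < G_2(w(1-2\angle_{h_1,h_2}))$, I chain
\[
G_1(w(1-2\angle_{h_1,h_2})) \le G_1(w) < G_2(w(1-2\angle_{h_1,h_2})) \le G_2(w),
\]
giving the required strict ordering. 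For uniqueness I rule out the remaining candidates: at $h_1$ the induced rates $G_1(w)$ and $G_2(w(1-2\angle_{h_1,h_2}))$ already satisfy $\pi_{a_1}<\pi_{a_2}$ by hypothesis, so (ii)~forces a deviation to $h_2$ and $h_1$ is not fixed; at any intermediate hyperplane (in particular $\hm$) the institution's choice requires $\pi_{a_1}=\pi_{a_2}$, but the same chain of inequalities rules out equality. Finally, stability follows because for any initial $\pi(0)$ the round-$1$ classifier is $h_1$ or $h_2$, and in either case $\pi(1)$ satisfies $\pi_{a_1}(1) \le G_1(w) < G_2(w(1-2\angle_{h_1,h_2})) \le \pi_{a_2}(1)$, so $\brtheta(\pi(1))=h_2$ and $\pi(2)=\pi^\ast$.

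The main obstacle is really just making sure that the institution's best-response characterization (ii) from Proposition~\ref{prop:gaussian-eq} depends only on the qualification rates $(\pi_{a_1},\pi_{a_2})$ and not on the shape of the cost distributions; once that is invoked, the argument reduces to monotonicity bookkeeping driven by the single strict inequality in the hypothesis, and one-step convergence falls out for free.
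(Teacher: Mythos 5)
Your proposal is correct and follows essentially the same route as the paper's proof: the same chain of inequalities $G_1(w(1-2\angle_{h_1,h_2})) \le G_1(w) < G_2(w(1-2\angle_{h_1,h_2})) \le G_2(w)$ is used to verify $h_2$ is a stable fixed point, and the analogous chain for arbitrary convex combinations of $h_1$ and $h_2$ rules out all other candidates. Your explicit one-step convergence argument for stability is a slightly more detailed version of what the paper asserts without elaboration (note only that when $\pi_{a_1}(0)=\pi_{a_2}(0)$ the round-$1$ classifier may be any convex combination of $h_1$ and $h_2$ rather than one of the two endpoints, but the same inequalities still force $\pi_{a_1}(1) < \pi_{a_2}(1)$, so nothing breaks).
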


\paragraph{Effect of subsidies} In this situation, an intervention that would effectively raise the equilibrium level of investment by the disadvantaged group is to subsidize the cost of investment. In particular, as long as we replace $G_1$ with a stochastically dominated distribution $\bar{G}_1$ such that $\bar{G}_1 > G_2(w(1-2\angle_{h_{1},h_2}))$, under the new dynamics $\Phi\sub$, $h_1$ will again be a stable equilibrium, and there will also exist a more balanced, unstable equilibrium at \nhedit{$h=\bar{h}_\mathrm{mid}$,} which is some convex combination of $h_1$ and $h_2$. At all equilibria of $\Phi\sub$, group $a_1$ will have higher levels of investment than $G_1(w(1-2\angle_{h_{1},h_2}))$.

However, this improvement may come at a cost to the advantaged group, since $\Phi\sub$ has multiple equilibria and some of them have group $a_2$ investing less than $G_2(w)$. Still one might argue that the equilibria of $\Phi\sub$ are more equitable, since the dynamics without subsidies always result in optimal investment by group $a_2$ and low investment by group $a_1$.

\subsection{Decoupling the Assessment Rule by Group}\label{sec:decoupling}

The models we studied in Sections~\ref{sec:multivariate-gaussian} and~\ref{sec:uniform} suggest that applying the same, or ``joint", \nhedit{assessment  rule} to heterogeneous groups results in undesirable trade-offs---between balance, stability, and other metrics---at all equilibria, even though \nhedit{there is a perfect assessment rule in each group that leads to the \optinvest in that group.}

Decoupling the classifier by group is a natural intervention in this setting. Namely, the institution may choose a group-specific $\theta_a \in \Theta$ to assess individuals from group $a\in \calA$, assuming that the group attribute information is available. This corresponds to choosing $\theta_a$ that maximizes the utility that the institution derives from each group separately.
Thus we now consider the \emph{decoupled dynamics} $\Phi\dec$ where the institution uses group-specific assessment rules, i.e.,  for all $a\in \calA$
\begin{equation}\label{eq:instit_decoupled}
\brtheta_a(\pi_a) \coloneqq \argmax_{\theta_a\in \Theta} ~ \ptp \tpr_a(\theta_a) \pi_a - \cfp \fpr_a(\theta_a) (1-\pi_a).\footnote{As when we defined the joint dynamics (Section~\ref{sec:model}), when the $\argmax$ is not unique, we assume ties are broken according to a fixed and well-defined order.}
\end{equation}

As in the standard joint setting individuals still acquire qualification according to their group utility as follows
\begin{equation*}
\brpi_a(\theta_a) \coloneqq G(w (\text{TPR}_a(\theta_a) - \text{FPR}_a(\theta_a) )).
\end{equation*}
We denote by $\pi\dec \in [0,1]^{|\calA|}$ the equilibria of the decoupled dynamics, $\Phi\dec = \left(\brpi_a \circ \brtheta_a\right)_{a\in \A}$.
It is not hard to see that decoupling is helpful in a group-realizable setting.
That is, the qualification rates of the decoupled equilibrium $\pi\dec$ \emph{Pareto-dominates} the qualification rates of all equilibria $\pi$ under a joint \nhedit{assessment rule,} whenever group-realizability holds.

\begin{proposition}[Decoupling]
Consider a group-realizable setting, that is, for every $a\in \calA$, there exists a perfect assessment rule $\otheta_a \in \Theta$ such that $\tpr_a(\otheta_a) - \fpr_a(\otheta_a) = 1$.
Then $\Phi\dec$ has a unique stable equilibrium $\pi\dec$, where $\pi\dec_a = G(w)$.
Moreover, for any equilibrium $\pi$ of the joint dynamics $\Phi$, \nhedit{$\pi\dec_a \geq  \pi_a$ for all $a\in \calA.$}
\nhedit{Furthermore, if there is no perfect assessment rule, i.e., 
\[\max_{\theta \in \Theta} \sum_{a\in \calA} n_a (\tpr_a(\theta) - \fpr_a(\theta)) < 1,
\]
then for some $a\in \A$, $\pi\dec_a > \pi_a$.
}
\end{proposition}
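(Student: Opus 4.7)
The plan is to split the proposition into three sub-claims---uniqueness and stability of $\pi\dec$, weak Pareto-domination $\pi\dec_a\geq\pi_a$, and strict improvement for some coordinate under non-realizability---and reduce each to material already developed together with basic facts about the best-response maps.

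For the first sub-claim, I would observe that under $\Phi\dec = (\brpi_a\circ\brtheta_a)_{a\in\calA}$ each coordinate $\pi_a$ evolves independently of the others, so the analysis decouples into $|\calA|$ copies of the single-group system. Group-realizability supplies $\otheta_a$ with $\tpr_a(\otheta_a)=1$ and $\fpr_a(\otheta_a)=0$, so applying Proposition~\ref{prop:realize} with $|\calA|=1$ to each group yields a unique non-zero fixed point at $\pi\dec_a=G(w)$. For stability I would strengthen this: for every $\pi_a\in[0,1]$ and every $\theta\in\Theta$,
\[
\ptp\tpr_a(\theta)\pi_a - \cfp\fpr_a(\theta)(1-\pi_a) \;\leq\; \ptp\pi_a \;=\; \ptp\tpr_a(\otheta_a)\pi_a - \cfp\fpr_a(\otheta_a)(1-\pi_a),
\]
so $\otheta_a$ lies in the argmax defining $\brtheta_a(\pi_a)$ in \eqref{eq:instit_decoupled}; under the fixed tie-breaking convention this gives $\brtheta_a(\pi_a)=\otheta_a$ identically in $\pi_a$, so $\Phi\dec$ reaches $\pi\dec_a=G(w)$ in a single step from any initial point, yielding global stability.

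For the second and third sub-claims I would use that any fixed point $\pi$ of the joint dynamics satisfies $\pi_a = G(w(\tpr_a(\theta^\ast)-\fpr_a(\theta^\ast)))$ with $\theta^\ast=\brtheta(\pi)$. Since $\tpr_a(\theta)-\fpr_a(\theta)\le 1$ always and the CDF $G$ is non-decreasing, $\pi_a\le G(w)=\pi\dec_a$ is immediate. For strict improvement, the hypothesis $\max_\theta \sum_a n_a(\tpr_a(\theta)-\fpr_a(\theta))<1$ together with $\sum_a n_a = 1$ and $\tpr_a(\theta)-\fpr_a(\theta)\le 1$ forces some $a\in\calA$ to satisfy $\tpr_a(\theta^\ast)-\fpr_a(\theta^\ast)<1$; assuming $G$ is strictly increasing on $[0,w]$ (consistent with the convention invoked in Corollary~\ref{cor:subsidy}), this yields $\pi_a<G(w)=\pi\dec_a$. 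The main obstacle I anticipate is precisely this monotonicity hypothesis on $G$: without strict increase near $w$, a plateau of $G$ could allow equality $\pi_a=\pi\dec_a$ at a strictly sub-optimal $\theta^\ast$, so I would make the assumption explicit alongside the claim, after which the remaining bookkeeping is routine.
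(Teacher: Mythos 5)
Your decomposition and your reduction to Proposition~\ref{prop:realize} per group is exactly the route the paper takes (its proof is a one-line reference to that proposition), and the second and third sub-claims are handled the same way: $\tpr_a(\theta)-\fpr_a(\theta)\le 1$ plus monotonicity of $G$ gives $\pi_a\le G(w)=\pi\dec_a$, and the non-realizability hypothesis forces strict inequality for some coordinate. Your observation that the strict version needs $G$ strictly increasing near $w$ is a legitimate point the paper leaves implicit (it does assume strictly increasing $G$ elsewhere, e.g.\ in Corollary~\ref{cor:subsidy} and Assumptions~\ref{ass:unif} and~\ref{assmp:mvgaussian}).

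One step in your stability argument overreaches: from ``$\otheta_a$ lies in the argmax of \eqref{eq:instit_decoupled}'' you cannot conclude that the fixed tie-breaking rule \emph{selects} $\otheta_a$; the convention is fixed but arbitrary among maximizers. The repair is that for $\pi_a\in(0,1)$ the maximum utility equals $\ptp\pi_a$ and is attained \emph{only} by rules with $\tpr_a(\theta)=1$ and $\fpr_a(\theta)=0$, so every selection of the tie-break induces the same best response $\brpi_a=G(w)$, and the one-step convergence you want follows for all interior initial points. At $\pi_a=0$ the argmax also contains reject-everyone rules (any $\theta$ with $\fpr_a(\theta)=0$), so $0$ can be a fixed point depending on the tie-break---this is precisely why Proposition~\ref{prop:realize} and the decoupling statement restrict to the \emph{non-zero} (equivalently, stable, since $\{0\}$ has measure zero) equilibrium. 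With that substitution your argument is complete and matches the paper's.
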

The proof of this proposition directly follows from Proposition~\ref{prop:realize}.

Indeed, decoupling always helps in the group-realizable setting---not only does it not decrease any group's \lledit{equilibrium} qualification rate, it also increases the \lledit{equilibrium} qualification rate of at least one group when realizability across all groups does not hold.
In Sections~\ref{sec:non-realizable} and \ref{sec:experiments} we take a closer look at decoupling in the absence of group-realizability and see that those cases are not as clear-cut.
When group-realizability does not hold, we see that in some cases decoupling is still helpful while in others it can significantly harm one group.



\section{Beyond group-realizability: Multiple equilibria within group}\label{sec:non-realizable}

Thus far we have considered settings where the learning problem is realizable (or almost realizable) within each group. This is a common assumption in various prior works, such as in \citet{Hu2019disparate}.
\nhedit{As we saw in Section~\ref{sec:groups}, there may be multiple undesirable equilibria when a joint assessment rule is used in a group-realizable setting, but these undesirable equilibria disappear in the decoupled dynamics.}

In many application domains, realizability does not hold even \nhedit{at a group level}.
That is to say, no \nhedit{assessment rule in $\Theta$} can perfectly separate \nhedit{qualified and unqualified individuals even within one group.}
This may be due to \nhedit{the fact that mapping individuals to the visible  feature space $\calX$ involves loss of information or there may be other sources of stochasticity in the domain~\citep{corbett18measure}, making it impossible to provide a high accuracy assessment of individuals' qualifications.}
A key consequence \nhedit{of the lack of realizability }is that even for a single group, the optimal classifier \nhedit{now can} vary greatly with $\pi_a$, the \nhedit{group's qualification rate.
As a result, our guarantees about the near-optimality} of stable equilibria (Theorem \ref{thm:eqi_approx_real}) no longer hold, and there could exist multiple stable equilibria each corresponding to a different qualification rate within a group. In this section, we investigate the existence of bad equilibria for a single group and its implications on decoupling when the learning problem is not group-realizable. For the rest of this section, we consider a single group, i.e.,  $|\calA|=1$ and suppress $a$ in the notation.

In the following proposition, we characterize conditions under which multiple equilibria exists in arbitrary feature spaces and assessment rules. This is a generalization of a classical result from \citet{coate93will} that considers a one-dimensional feature space; \lledit{for completeness, we restate and prove this result as a consequence of Proposition~\ref{prop:multi_eq} in Appendix \ref{app:non-realizable}.}
\begin{proposition}[Multiple equilibria in arbitrary feature spaces]\label{prop:multi_eq} Let $\Phi$ be as defined in Section~\ref{sec:model}. 
	\nhedit{
		For any qualification rate $\pi$, let
		\[ \beta(\pi) \coloneqq \tpr(\brtheta(\pi)) - \fpr(\brtheta(\pi)),
		\]
		be the difference between true and false positive rates of the institution's utility maximizing assessment rule with respect to $\pi$.
	}
	Assume $\beta(\pi)$ is continuous, the CDF of the cost $G$ is continuous and that there exists $\theta \in \Theta$ such that $\Pr(\hat Y_\theta = 1) = 0$ and $\theta' \in \Theta$ such that $\Pr(\hat Y_{\theta'} = 1) =1$, \nhedit{i.e., there is a assessment rule that accepts everyone and an assessment rule that rejects everyone.} Also suppose the likelihood ratio \nhedit{$\phi(x) := \frac{\Pr(X = x\mid Y=0)}{\Pr(X = x\mid Y=1)}$} is strictly positive on $\calX$.
	
	If  $x< G(w\beta(x))$ for some $x \in (0,1)$, then there exists at least two distinct non-zero equilibria where $\pi = \Phi(\pi)$. If in addition $\beta$ is differentiable, an equilibrium at $\pi$ is locally stable whenever $G'(w\beta(\pi)) < |\beta'(\pi)|$, where $G'$ and $\beta'$ denote the derivatives of $G$ and $\beta$ respectively.
	
\end{proposition}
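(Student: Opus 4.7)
My plan is to first rewrite the dynamics in a tractable form and compute the boundary values $\Phi(0)$ and $\Phi(1)$; then to locate two non-zero fixed points by applying the intermediate value theorem to $f(\pi):=\Phi(\pi)-\pi$ on each of two subintervals of $(0,1)$; and finally to obtain the stability criterion by linearizing $\Phi$ at a fixed point. Since $\Phi(\pi)=\brpi(\brtheta(\pi))=G\bigl(w[\tpr(\brtheta(\pi))-\fpr(\brtheta(\pi))]\bigr)=G(w\beta(\pi))$, fixed points of $\Phi$ are exactly the zeros of $f$, and $f$ is continuous because $G$ and $\beta$ are.

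For the boundary values: at $\pi=0$ the institution's expected utility reduces to $-\cfp\fpr(\theta)$, whose maximizers over $\Theta$ must have $\fpr(\theta)=0$, and such $\theta$ exist by the hypothesis that some rule assigns $\Pr(\hat Y_\theta=1)=0$. Strict positivity of $\phi$ forces any such maximizer to have $\tpr(\theta)=0$ as well, because $\fpr(\theta)=0$ implies that the accept set carries zero $\Pr(\cdot\mid Y=0)$-mass, which by $\phi>0$ transfers to zero $\Pr(\cdot\mid Y=1)$-mass. So $\beta(0)=0$ and $\Phi(0)=G(0)=0$; a mirror-image argument using the all-accept rule gives $\beta(1)=0$ and $\Phi(1)=0$, so $f(1)=-1<0$. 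With the hypothesis that $f(x_0)>0$ for some $x_0\in(0,1)$, IVT on $[x_0,1]$ immediately yields a non-zero equilibrium $\pi_{\mathrm{hi}}\in(x_0,1)$. For the second non-zero equilibrium, I would show that $f(\pi)<0$ for all sufficiently small $\pi>0$ and then invoke IVT on $[\varepsilon,x_0]$ for small $\varepsilon$. The mechanism is that when $\pi$ is small the coefficient $\ptp\pi$ multiplying $\tpr(\theta)$ is negligible compared to $\cfp(1-\pi)$ multiplying $\fpr(\theta)$, so the utility-maximizing rule is forced to be very conservative; combined with $\phi>0$ (which rules out an atom in the accepted likelihood-ratio region near $0$), this drives $\beta(\pi)/\pi\to 0$ and hence $\Phi(\pi)=G(w\beta(\pi))<\pi$ in a punctured neighborhood of $0$.

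For stability, differentiating yields $\Phi'(\pi)=w\,G'(w\beta(\pi))\,\beta'(\pi)$, so the usual discrete-time criterion $|\Phi'(\pi^\ast)|<1$ at a fixed point $\pi^\ast$ translates (after accounting for the sign of $\beta'(\pi^\ast)$) into the stated comparison between $G'(w\beta(\pi^\ast))$ and $|\beta'(\pi^\ast)|$. Overall I expect the main obstacle to be establishing the sublinearity $\Phi(\pi)<\pi$ in a neighborhood of $0$: this is precisely what upgrades ``at least one non-zero equilibrium'' (which follows from the $x_0$ hypothesis and a single IVT, together with $f(1)<0$) to the stronger claim of two equilibria. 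The rest of the argument is boundary-value computation, two applications of the intermediate value theorem, and standard linearization at a fixed point.
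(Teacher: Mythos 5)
Your overall architecture (rewrite $\Phi(\pi)=G(w\beta(\pi))$, compute boundary behaviour, apply the intermediate value theorem on two subintervals, linearize for stability) matches the paper's. However, the step you yourself flag as the main obstacle --- showing $\Phi(\pi)<\pi$ in a punctured neighborhood of $0$ --- is not actually closed by the mechanism you propose. You argue that the conservativeness of the utility-maximizing rule for small $\pi$ drives $\beta(\pi)/\pi\to 0$ and that this ``hence'' gives $G(w\beta(\pi))<\pi$. That implication fails under the stated hypotheses: $G$ is only assumed continuous, so $G$ may be arbitrarily steep at $0$ (e.g.\ $G(t)\sim t^{1/4}$), in which case $\beta(\pi)=o(\pi)$ does not prevent $G(w\beta(\pi))\gg\pi$ near $0$. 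A Lipschitz assumption on $G$ would rescue your version, but none is available.

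The paper avoids this by proving a strictly stronger fact: there is a threshold $\bar\pi>0$ such that $\beta(\pi)$ is \emph{identically zero} for all $\pi\le\bar\pi$. Concretely, the institution accepts a feature value $x$ only when $\ptp\,\pi\,\Pr(X=x\mid Y=1)\ge \cfp\,(1-\pi)\Pr(X=x\mid Y=0)$, i.e.\ when $\ptp/\cfp\ge \frac{1-\pi}{\pi}\phi(x)$; since $\phi$ is strictly positive and $\frac{1-\pi}{\pi}\to\infty$, for all sufficiently small $\pi$ this fails for every $x$, so the best response rejects everyone and $\tpr=\fpr=0$ exactly. Then $\Phi(\pi)=G(0)=0<\pi$ on $(0,\bar\pi]$ with no regularity needed from $G$ beyond $G(0)=0$, and the two applications of the intermediate value theorem on $[\bar\pi,x_0]$ and $[x_0,1]$ go through as you describe. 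You should replace your asymptotic ``$\beta(\pi)/\pi\to 0$'' argument with this exact-vanishing argument. (As a secondary point, your claim that $|\Phi'(\pi^\ast)|<1$ ``translates'' into the stated inequality $G'(w\beta(\pi^\ast))<|\beta'(\pi^\ast)|$ does not follow from the chain rule as written, since $\Phi'(\pi)=wG'(w\beta(\pi))\beta'(\pi)$; but the paper's own justification of the stability clause is equally terse, so the substantive gap is the one above.)
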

\begin{proof}
	When $\pi = 1$, the institution's best response is to accept everyone regardless of their features, so $\beta(1) =  \tpr(\theta') - \fpr(\theta') = 1- 1= 0$.\footnote{Recall that $\fpr(\theta):= \Pr(\hat{Y}_\theta = 1\mid do(Y=0))$, which is well-defined even when $\Pr(Y=0) = 0$.}
	
Note that $\frac{1-\pi}{\pi} 	\rightarrow \infty$ as $\pi \rightarrow 0$. This, together with the fact that  \lledit{$\phi(x)$ is strictly positive} means that	 there must exist $\bar{\pi} > 0$ such that 
	$\frac{\ptp}{\cfp} < \frac{1-\bar{\pi}}{\bar{\pi}}\phi(x)$ for all $x \in \calX.$ Therefore, for all $\pi \le \bar{\pi}$, the institution's best response is to accept no one regardless of their features, so we have that $\beta(\pi) = \tpr(\theta) - \fpr(\theta) = 0-0 = 0$ for $\pi \leq \bar{\pi} $.	

Since $G(0) = 0$, we have that $\bar{\pi} > G(w\beta(\bar{\pi} )  )= 0$ and $1 > G(w\beta(1 ))  = 0$.
	 By assumption there exists $x < G(w\beta(x))$ for some $x \in (0,1)$, and by the above discussion, we must have $x \in (\bar{\pi},1)$. Hence there must be at least 2 solutions to $\pi = \Phi(\pi)$ in  $(\bar{\pi},1)$ and in particular they are non-zero. The condition for local stability follows directly from chain rule.
\end{proof}

Proposition~\ref{prop:multi_eq} describes conditions under which there exists more than one equilibrium in the dynamics modeled in Section 2. Given a differentiable $\beta(\pi)$, one can always construct a monotonically increasing $G$, such that the dynamics $\Phi$ has any number of locally stable equilibria. The implication of having multiple equilibria is that the dynamics may converge to different equilibrium \nhedit{qualification rates} depending on the initial investment, even for a single group. This makes the setting particularly hard to analyze. 

Nevertheless, the following result shows that even in the non-realizable setting, subsidizing the cost of investment by changing the distribution $G$ to a stochastically dominant distribution $\bar{G}$ will create a new equilibrium that has a higher \investmentlevel. In other words, subsidies in the non-realizable setting also improve the quality of equilibria. However, the new equilibrium is not guaranteed to be locally stable. We see some ramifications of this empirically in the next section.
 
\begin{proposition}[Subsidies without realizability]\label{prop:non-realizable-subsidies}
Suppose $\pi^* > 0$ is an equilibrium for the dynamics $\Phi_G$, where the cost of investment is distributed according to $G$ on $[0,1]$. Let $\bar{G}$ be a CDF that is stochastically dominated by $G$, that is, $\bar{G}(x) > G(x)$ for all $x \in (0,1)$, and both $G$ and $\bar{G}$ are strictly increasing. Then there exists $\bar{\pi} > \pi^*$ such that $\bar{\pi}$ is an equilibrium for $\Phi_{\bar{G}}$.
\end{proposition}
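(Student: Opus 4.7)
The plan is to establish existence of the claimed equilibrium $\bar{\pi} > \pi^{*}$ by an intermediate-value argument applied to
\[F(\pi) \;:=\; \Phi_{\bar G}(\pi) - \pi \;=\; \bar G\!\bigl(w\,\beta(\pi)\bigr) - \pi\]
on the interval $[\pi^{*},1]$, where $\beta(\pi) := \tpr(\brtheta(\pi)) - \fpr(\brtheta(\pi))$ as in Proposition~\ref{prop:multi_eq}. The underlying intuition is that pointwise stochastic dominance shifts the graph of $\Phi$ strictly upward on the interior of $[0,1]$, so wherever $\Phi_{G}$ crosses the diagonal the translated curve $\Phi_{\bar G}$ sits strictly above it, and the new crossing is forced to the right.

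First I would use that $G$ is strictly increasing on $[0,1]$ with $G(0)=0$ to argue that the fixed-point identity $\pi^{*} = G(w\beta(\pi^{*}))$ combined with $\pi^{*}>0$ forces $w\beta(\pi^{*})>0$. In the non-degenerate case $\pi^{*}<1$ (the case $\pi^{*}=1$ is vacuous since no $\bar\pi \in [0,1]$ could exceed it), strict monotonicity of $G$ also gives $w\beta(\pi^{*})<1$. Thus the strict stochastic-dominance hypothesis $\bar G > G$ on $(0,1)$ applies at the point $w\beta(\pi^{*})$, yielding
\[F(\pi^{*}) \;=\; \bar G\!\bigl(w\beta(\pi^{*})\bigr) - G\!\bigl(w\beta(\pi^{*})\bigr) \;>\; 0.\]
At the other endpoint, $F(1) = \bar G(w\beta(1)) - 1 \le 0$ because any CDF is bounded above by $1$.

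Under the continuity of $\beta$ — the same regularity condition used in Proposition~\ref{prop:multi_eq}, which together with continuity of the strictly increasing CDF $\bar G$ makes $F$ continuous — the intermediate value theorem then produces some $\bar\pi\in[\pi^{*},1]$ with $F(\bar\pi)=0$, i.e.\ $\Phi_{\bar G}(\bar\pi)=\bar\pi$. Because $F(\pi^{*})>0$ strictly, the zero must occur at some $\bar\pi>\pi^{*}$, giving the required equilibrium.

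The main obstacle is really the continuity of $\beta$: the argmax $\brtheta(\pi)$ need not vary continuously with $\pi$, so $\beta$ (and hence $F$) can in principle exhibit jumps as the optimal rule switches between assessment parameters. If the hypothesis of Proposition~\ref{prop:multi_eq} is not assumed here, I would fall back on $\bar\pi := \sup\{\pi\in[\pi^{*},1]:\Phi_{\bar G}(\pi)\ge \pi\}$ and argue that the fixed tie-breaking rule together with continuity of $\bar G$ supplies enough one-sided regularity (upper semicontinuity on one side, lower on the other) at that supremum to conclude that it is itself a fixed point — either the supremum is attained with equality, or any jump at the supremum must straddle the diagonal and so $\bar\pi$ still satisfies $\Phi_{\bar G}(\bar\pi)=\bar\pi$ by right-continuity of $\bar G$. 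For the statement as given, however, inheriting continuity of $\beta$ is the cleanest route.
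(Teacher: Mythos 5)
Your proposal is correct and follows essentially the same route as the paper: the paper's proof is an intermediate-value argument showing that $\bar G^{-1}(\pi) - w\beta(\pi)$ changes sign on $(\pi^*,1)$, which is the same crossing argument you make for $\bar G(w\beta(\pi)) - \pi$, just parametrized through $\bar G^{-1}$ instead of $\bar G$. You are in fact more explicit than the paper about the endpoint inequalities and about the continuity of $\beta$ that the argument silently inherits from Proposition~\ref{prop:multi_eq}.
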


\begin{proof}
	By assumption, we have that $\bar{G}^{-1}(\pi^*) < G^{-1}(\pi^*)$, so $\bar{G}^{-1}(\pi^*) < w\beta(\pi^*)$. Since $\bar{G}^{-1}(1) > \beta(1)$, we must have $\bar{G}^{-1}(\bar{\pi}) = w\beta(\pi)$ for some $\bar{\pi} \in (\pi^*, 1)$.
\end{proof}


\section{Simulations with non-realizability}
\label{sec:experiments}
In this section we present results from numerical experiments examining the effects of decoupling and subsidies under our model of dynamics, in the absence of group-realizability. Since the type of dynamic data needed for experiments on a real application would require randomized controlled trials, we instead consider a stylized semi-synthetic experiment, based on a widely used FICO credit score dataset from a 2007 Federal Reserve report \citep{fed07}. Importantly, only aggregate statistics were reported and the data we accessed does not contain sensitive or private information. We note that our modeling assumptions may not be realistic for this dataset (see Section~\ref{sec:discussion} for a discussion) and our simulations should not be interpreted as policy recommendations. Instead, these experiments help us illustrate qualitatively the types of dynamics one may find using real world data.

\paragraph{Stylized Model}
We describe how our model can be instantiated to a \lledit{highly} stylized example of credit scoring and lending. Assume a loan applicant either has the means to repay a loan or not. If they have the means to repay, they always repay ($Y=1$); otherwise they always default ($Y=0$). In order to have the means to repay, applicants must make an \emph{ex ante} investment at the cost of $C$, whose distribution is $\Pr(C < c) = G(c)$. This represents costly actions an individual has to take in order to acquire the financial ability to repay loans, e.g. working at a stable job or taking job preparation classes. Applicants from group $a$ who have the means to repay receive credit scores $X$ drawn from $f_1^a$ and those who don't receive credit scores drawn from $f_0^a$. The decision of the bank is to approve or reject a loan applicant, given their credit scores.

\paragraph{Dataset} FICO scores are widely used in the United States to predict credit worthiness. The dataset, which contains aggregate statistics, is based on a sample of 301,536 TransUnion TransRisk scores from 2003 \citep{fed07} and has been preprocessed by \citet{hardt2016}. These scores, corresponding to $X$ in our model, range from 300 to 850. \lledit{For simplicity, we rescale the scores so that they are between 0 and 1.} Individuals were labeled as defaulted if they failed to pay a debt for at least 90 days on at least one account in the ensuing 18-24 month period. The data is also labeled by race, which is the group attribute $A$ that we use. We  compute empirical conditional feature distributions $\Pr(X=x\mid A=a, Y=y)$ from the available data and fit Beta distributions\footnote{We simulate 100,000 samples from the empirical PDF (see Figure~\ref{fig:distributions}) and fit a Beta distribution by maximum likelihood estimation.} to these to obtain $f_0^a$, $f_1^a$. 

We treat these distributions \emph{as if} they came from our model as shown in Figure~\ref{fig:graph-ourmodel}, for the sole purpose of illustration. This is not to claim that our modeling assumptions hold on this dataset, as discussed earlier. Given the lending domain is complex, our aim is not to faithfully represent this particular domain with our model, but to simulate feature distributions that exhibit group heterogeneity and non-realizability, hence extending our consideration beyond the idealized settings of Sections \ref{sec:realizable} and~\ref{sec:groups}.

 Figure~\ref{fig:distributions} shows the histograms as well as the fitted Beta distributions for $f_0^a$, $f_1^a$, where $a$ is the race attribute.  It is clear that group-realizability does not hold even approximately, since there is significant overlap in the distributions of credit scores for people who repaid and for people who did not repay.

\begin{figure}[t]
\setlength{\abovecaptionskip}{-5pt}
\setlength{\belowcaptionskip}{-2pt}
\centering
	\includegraphics[width=.5\columnwidth, trim=10 0 20 0,clip]{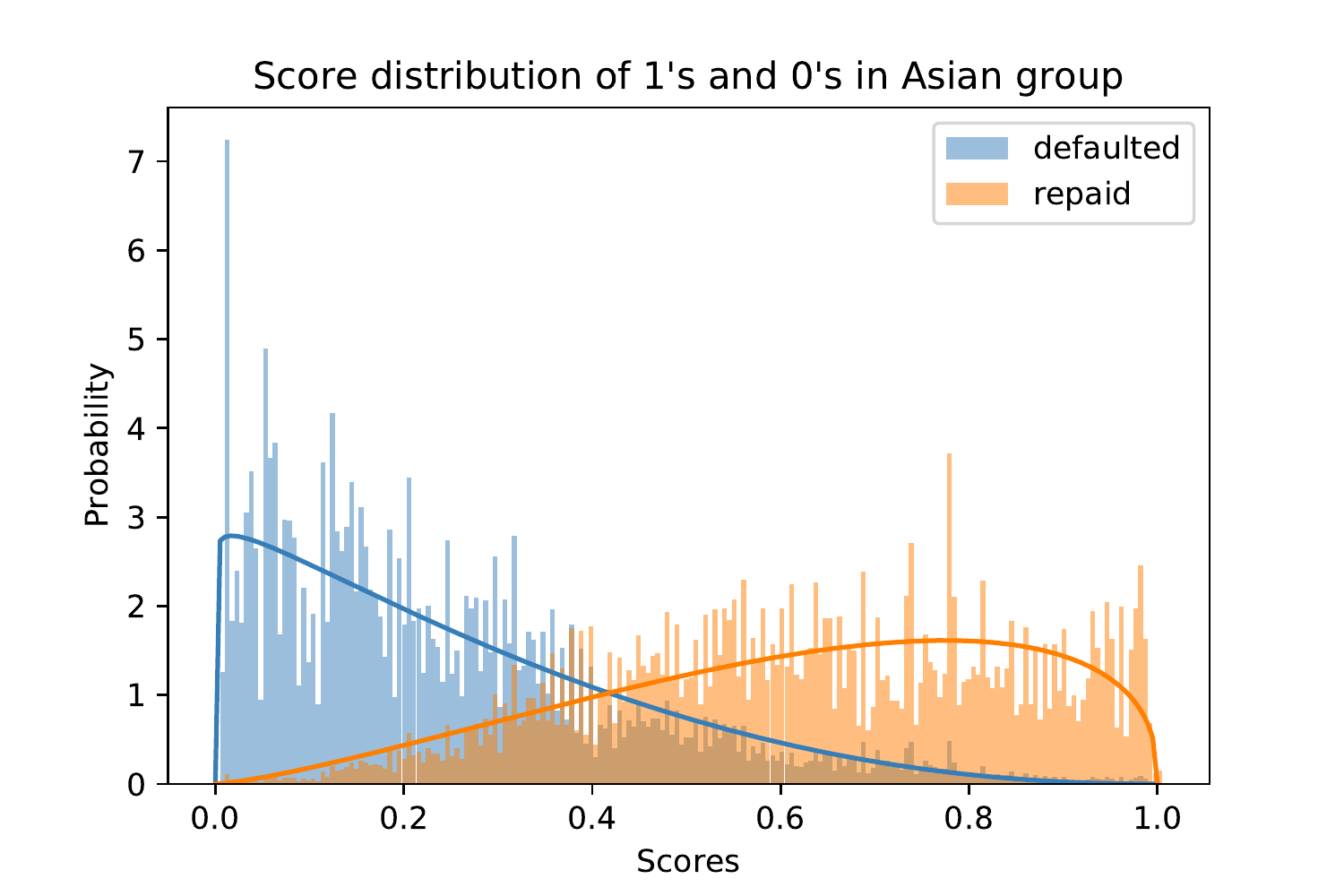}%
	\includegraphics[width=.5\columnwidth, trim=10 0 20 0,clip]{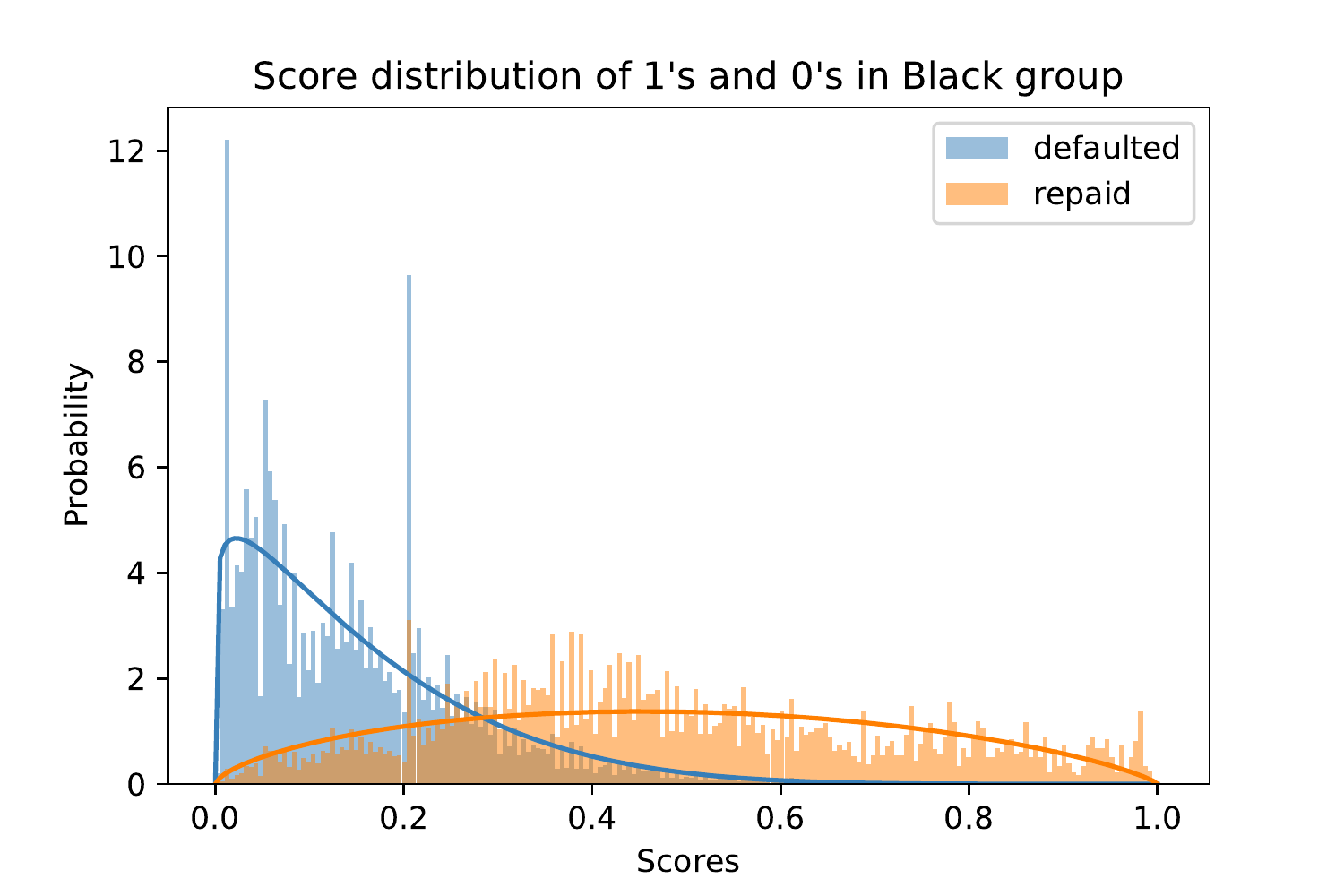}
	\includegraphics[width=.5\columnwidth, trim=10 -10 20 0,clip]{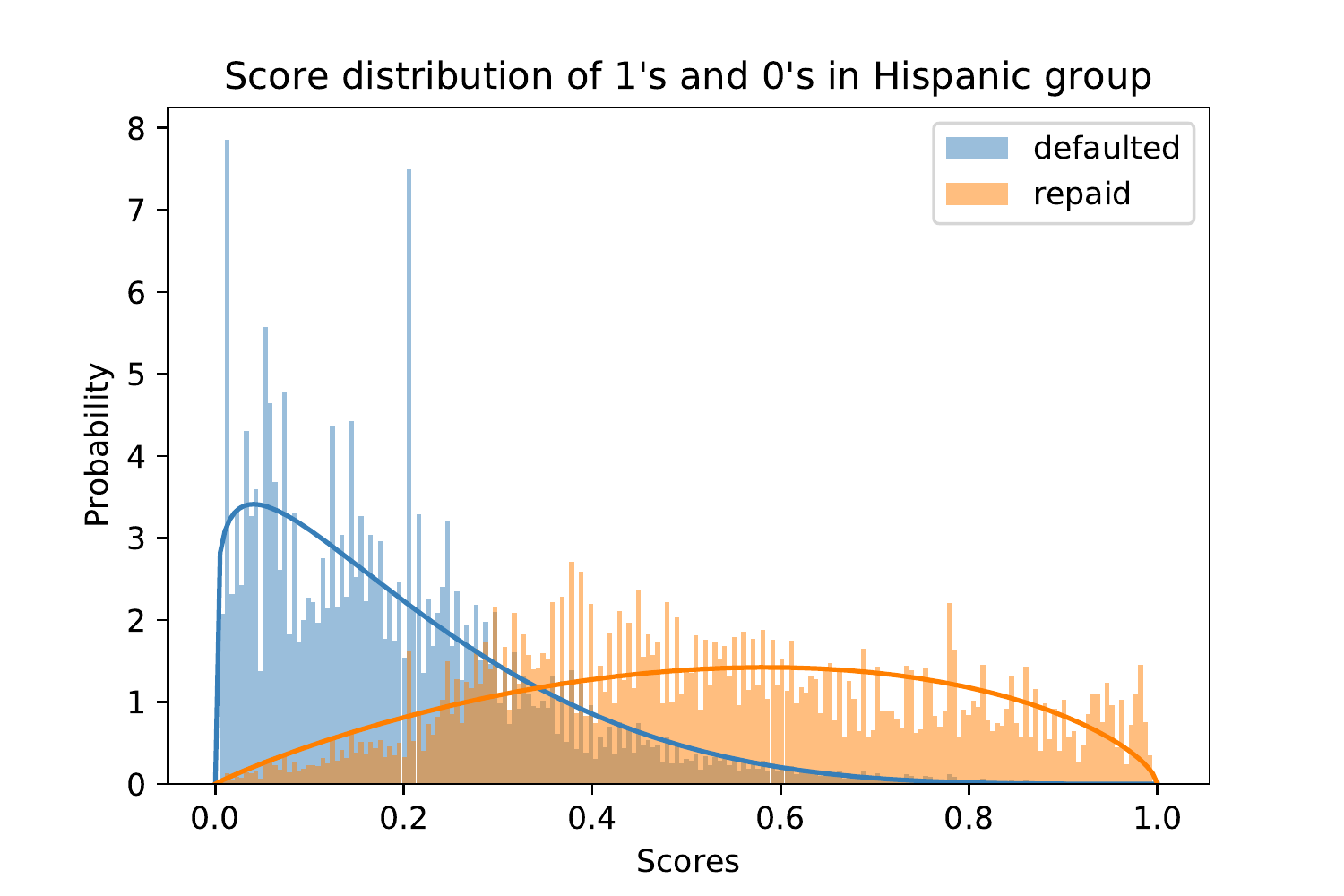}%
	\includegraphics[width=.5\columnwidth, trim=10 -10 20 0,clip]{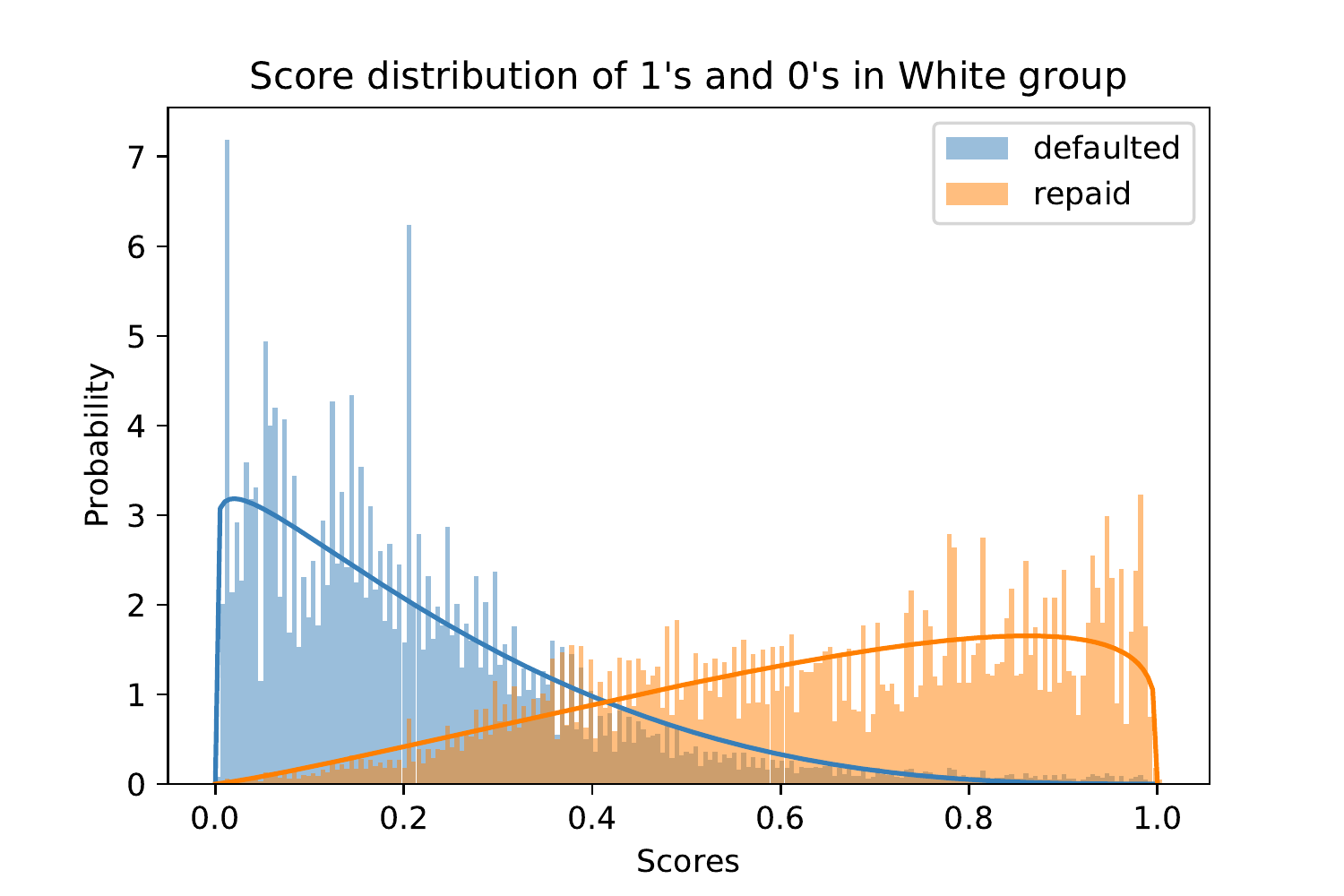}
	\caption{Score distributions conditioning on repayment outcome ($Y$) for different race groups}\label{fig:distributions}
	\vspace{-1em}
\end{figure}


\subsection{Effects of Decoupling when Multiple Stable Equilibria Exist }

Although decoupling is guaranteed to improve the \investmentlevel~at equilibrium over using a joint decision rule for every group (Sections~\ref{sec:realizable} and \ref{sec:groups}), \lledit{this} is not necessarily true in the non-realizable setting. In fact, even when $G$ is the uniform distribution on $[0,1]$ in all groups \lledit{(i.e. the cost of investment $C$ is uniformly distributed on $[0,1]$, as we considered in Section~\ref{sec:groups})}, decoupling did not benefit all groups. As can be seen from Figure~\ref{fig:decoupling1} in Appendix~\ref{app:expt}, while the White and Asian groups had a higher qualification rate after decoupling, the Black and Hispanic groups saw their \lledit{equilibrium} qualification rate decrease. On the other hand, the effects of decoupling were small in this case (less than 3 percent points difference in the final \investmentlevel). 

We now show that the effect of decoupling can be drastic depending on $G$. Recall that in Section~\ref{sec:non-realizable}, we showed that multiple equilibria, with possibly vastly different \investmentlevels, may exist under the non-realizable setting even when there is a only single group. In general the existence of multiple equilibria depends on properties of $G$, that is, how the cost of investment is distributed in a group. In Figure~\ref{fig:decoupling2}, we show the change in equilibrium investment level after decoupling for an experiment with two groups, Asian and Hispanic. The two plots each correspond to a different bimodal Gaussian distribution for $G$, truncated to $[0,1]$, that have been chosen such that the decoupled dynamics have multiple stable equilibria for the Hispanic (right) and the Asian (left) respectively.

In both plots, we can see that the effects of decoupling depend on the initial \investmentlevel. If the initial \investmentlevel~was too low, or too high, the decoupled dynamics converge to an equilibrium where one of the groups invest \lledit{in qualifications} at a much lower level than they would under the joint dynamics.\lledit{\footnote{Figure~\ref{fig:decoupling-opt1} in Appendix~\ref{app:expt} shows the converged \investmentlevels~of both groups under decoupled and joint dynamics.}}

\begin{figure}[t]
\setlength{\belowcaptionskip}{-10pt}
	\centering
	\includegraphics[width=.5\columnwidth, trim = 0 0 30 0, clip]{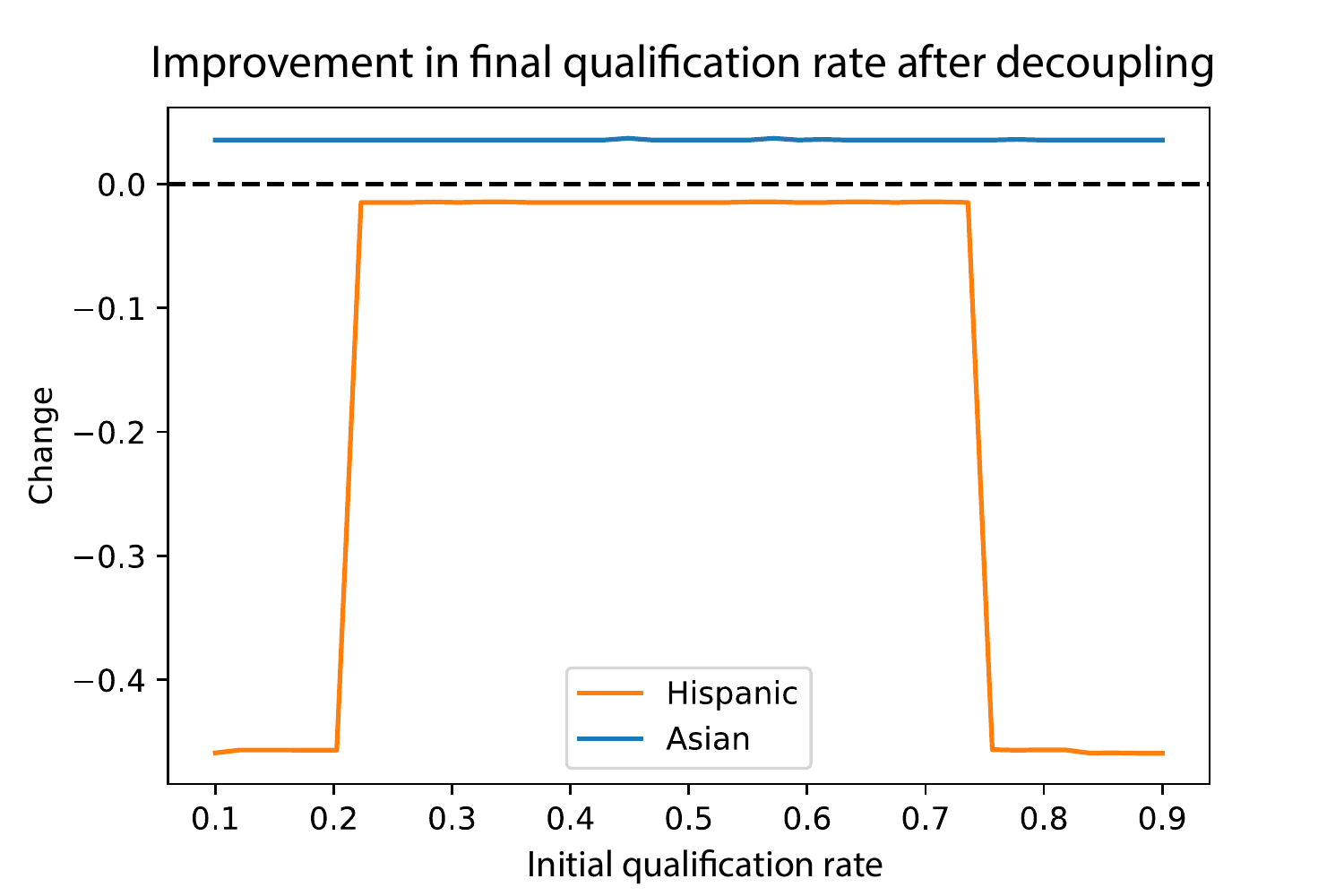}%
	\includegraphics[width=.5\columnwidth, trim = 0 0 30 0, clip]{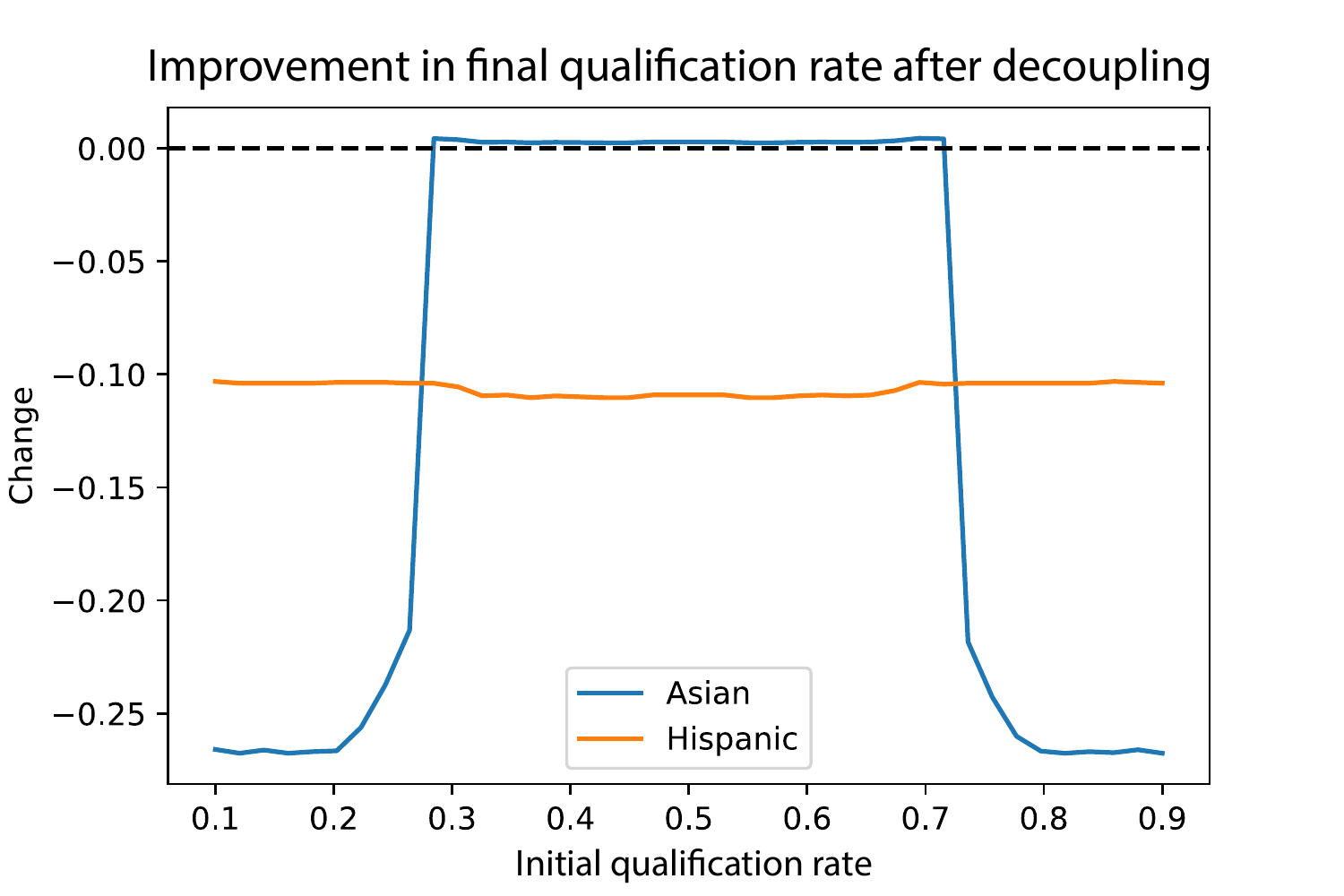}
	\caption{Effects of decoupling in presence of multiple equilibria. We vary the initial qualification rate in the x-axis. The right (resp. left) plot was generated using a bimodal normal distribution for $G$ with modes at 0.57 and 0.74 (resp. at 0.57 and 0.63). }\label{fig:decoupling2}
\end{figure}

\subsection{Subsidizing the Cost of Investment}
In this experiment, we consider if subsidizing the cost of investment of one group by changing $G$ 
improves their new equilibrium \investmentlevel, under both decoupled and joint dynamics. Specifically, we vary the cost of investment in the Black group.

We use a truncated normal distribution for $G$ and vary its mean (on the x-axis) for a single group, while keeping the other groups' $G$ unchanged (mean of $0.6$).

Figure \ref{fig:subsidy} shows that subsidizing the cost of investment is effective in raising the equilibrium investment level of a group, both in the joint learning and decoupled learning case. Interestingly, large amounts of subsidy for a single group reduced the equilibrium investment levels of other groups. As also suggested by theoretical results in section \ref{sec:subsidies}, subsidizing the \investmentlevel~of one group does sometimes entail a tradeoff in the \investmentlevels~of other possibly more advantaged groups.

Interestingly, lowering the mean cost of investment in the Black group below 0.35 caused the final \investmentlevel~to decrease. This is not a contradiction of Proposition \ref{prop:non-realizable-subsidies}, which argues that \nhedit{equilibria improve under}
 subsidies but does not guarantee that the dynamics will converge to the improved equilibrium. In this case, the decoupled dynamics for the Black group (where the mean cost of investment is $0.30$) actually converged to a limit cycle and the final \investmentlevel~in the plot is an average of the points in the limit cycle. Limit cycles are a challenging object to study in dynamical systems and game theory. While we have commented on their existence in a simple model in group-realizable setting of Section~\ref{sec:multivariate-gaussian}, we leave their implications in the general non-realizable setting to future work.


\begin{figure}[t]
	\centering
	\includegraphics[width=.75\columnwidth, trim = 0 0 10 0, clip]{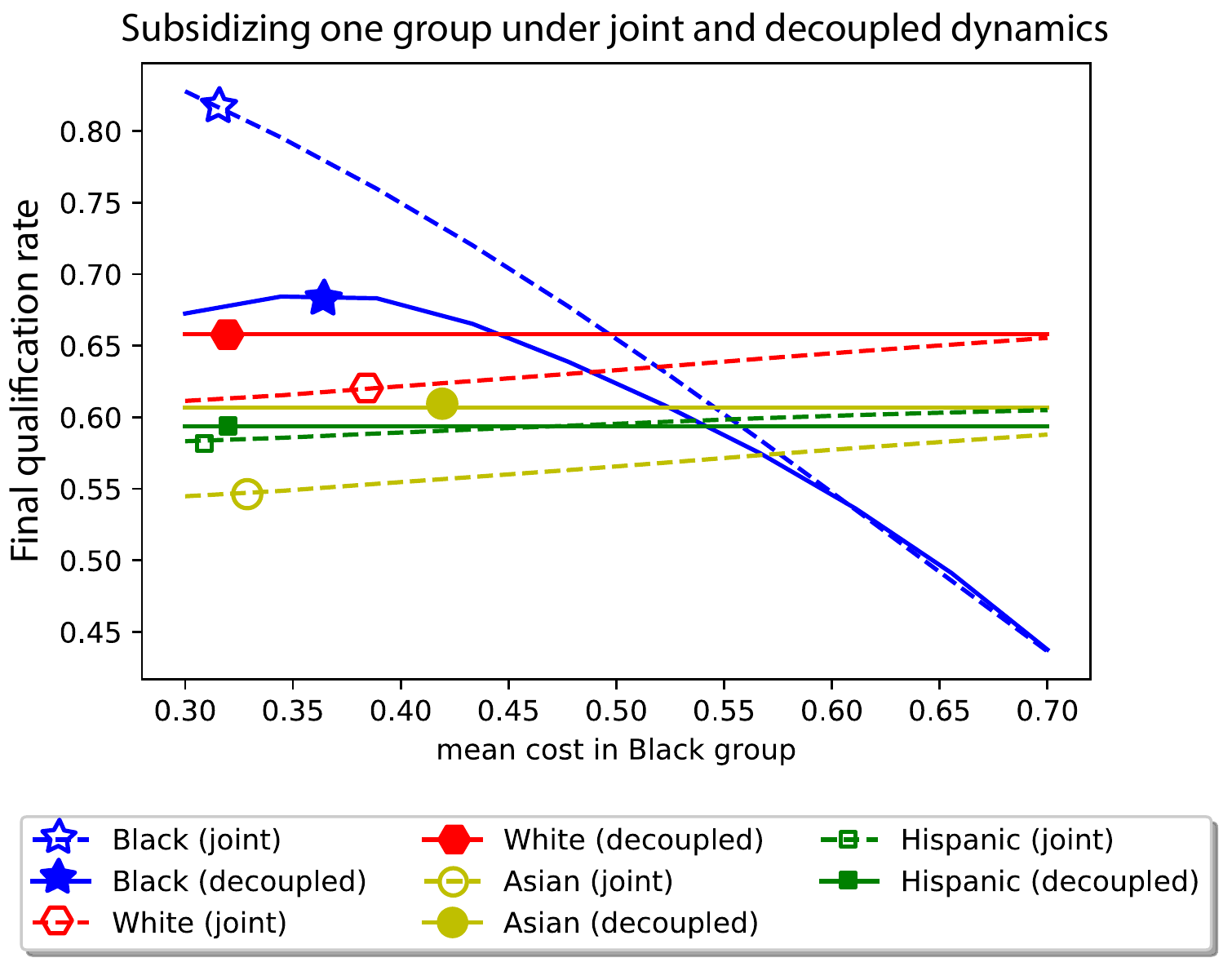}%

	\caption{Effects of raising the average cost of investment, by varying the mean of $G$ on the $x$-axis.}\label{fig:subsidy}
\end{figure}


\section{Related Work}\label{sec:related}

Our work follows a growing line of work on how machine learning algorithms interact with human actors in a dynamic setting, with the goal of understanding and mitigating disparate impact.

Recent work examine the long-term impact of group fairness criteria \citep[see e.g.,][Chapter 2]{barocas-hardt-narayanan} on automated decision making systems: \citet{liu18delayed} show that static fairness criteria fail to account for the delayed impact that decisions have on the welfare of disadvantaged groups. In the context of hiring, however, \citet{Hu2018shortterm} find that applying the demographic parity constraint in a temporary labor market achieves an equitable long-term equilibrium in the permanent labor market by raising worker reputations.

Prior work on the fairness of machine learning has also examined tradeoffs between fairness criteria \citep{kleinberg16inherent, chould16fair}, as well as the incompatibility between risk minimization and fairness criteria \citep{liu18implicit}, assuming that the qualification rates differ across groups. These results concern the static setting, whereas we highlight the fact that qualification rates tend to change in response to the decision rules in place. 

Another line of work \citep{hashimoto18demographics,zhang2019longterm} analyzes a dynamic model where users respond to errors made by an institution by leaving the user base uniformly at random, and demonstrate how the risk-minimizing approach to machine learning can amplify representation disparity over time. This is complementary to our work which models individuals as \lledit{rational 
decision makers who may or may not have the incentive to acquire the positive label.} In particular, \citet{hashimoto18demographics} show that equilibria with equal user representation from all groups can be unstable, and that robust learning can stabilize such equilibria. Unlike in our model, \lledit{the user representation model does not distinguish between positive and negative labels, and thus do not distinguish between false negative and false positive errors}. This is a crucial distinction in high-stakes decision making as different error types present asymmetric incentives \lledit{for individuals}, as explained in Section~\ref{sec:model}; for example, a high false positive rate in hiring would encourage under-qualified job applicants.

%
%
%

\citet{Hu2019disparate} and \citet{Milli2019social} study the disparate impact of being robust towards strategic manipulation \citep[see e.g.,][]{Hardt:2016:SC}, where individuals respond to machine learning systems by manipulating their features to get a better classification. In contrast to our model (Figure~\ref{fig:subsidy}), their setting models the individual as intervening directly on their features, $X$, and this is assumed to have no effect on their qualification $Y$. This assumption applies to features that are easy to `game' (e.g. scores on standardized tests can be improved by test preparation classes) but is less applicable to features that more directly correspond to \emph{investment} in one's qualifications (e.g. taking AP courses in high school). \citet{Hu2019disparate} also show that subsidizing the costs of the disadvantaged group to strategically manipulate their features can sometimes lead to harmful effects. 
%
%
%
\citet{kleinberg18investeffort} study a non-dynamic setting where some features of the individual result in higher qualification and some do not, as described by an ``effort profile'', which is a direct manipulation of $X$, that the institution wants to incentivize the individual to pursue.

Our work is also related to the topic of statistical discrimination in economics \citep{phelps72statistical, Arrow73theory,Arrow98economics}, which studies how disparate market outcomes \lledit{at} equilibrium can arise from imperfect information. This line of work often involves wage discrimination, whereas we assume the wage is fixed and standard for all groups. \citet{coate93will} proposed a model of rational individual investment in the labor market under a fixed wage and showed that affirmative action\footnote{In this work, affirmative action is defined as constraining the rate of acceptance, $\Pr(\hat{Y}_\theta = 1\mid A=a)$, to be equal in both groups.} may lead to an undesirable equilibrium where one group still invests sub-optimally. The model in our work is most closely related to their model, with two key distinctions: 1)~We allow features $X$ to be multi-dimensional, whereas~\citet{coate93will} assumes that $X$ is a one-dimensional `noisy signal'. 
2)~We consider the case where the conditional feature distributions,  $\Pr(X=x \mid Y=y, A=a)$, differ by groups whereas~\citet{coate93will} assumes that the groups are identically distributed. Note that under our models, if the conditional feature distributions were shared across groups, then \emph{any} hiring policy will result in fully balanced equilibria where all groups have the same \investmentlevel~and are hired at the same rate. This does not corroborate with reality, where conditional feature distributions do in fact differ across groups and we routinely observe institutions applying the same model to all individuals only to see obviously discriminatory outcomes \citep{amazon}. By modeling feature heterogeneity across groups, we find that it necessarily leads to disparate equilibria.

Recently, \citet{Mouzannar2019socialequality} studied the equilibria of \investmentlevels~under a generic class of dynamics, focusing on contractive maps and the effects of affirmative action. In contrast, our work motivates a model of dynamics based on rational investment, and this typically leads to non-contractive dynamics. We are both interested in balanced equilibria, which \citet{Mouzannar2019socialequality} terms `social equality'.

Finally, our work studies two interventions for finding more desirable equilibria: decoupling the classifier and subsidizing the cost of investment.  
Several works, including 
\citet{dwork18a} and \citet{pmlr-v97-ustun19a} have studied decoupled classifiers in the static classification setting. Our work sheds light on  when such interventions are useful in the dynamic decision making setting.  




\section{Discussion and Future Work}\label{sec:discussion}
In this work, we have made the following contributions:
\begin{enumerate}[leftmargin=*]
	\item We proposed a dynamic model of automated decision making where individuals invest rationally based on the current assessment rule. We studied the properties of equilibria under these dynamics.
	\item We showed that common properties of real data, namely heterogeneity across groups and the lack of realizability, lead to undesirable tradeoffs at equilibria, resulting in long term outcomes that disadvantage one or more groups.
	\item We considered two interventions---decoupling and subsidizing the cost of investment---and showed that they have a significant impact on the nature of equilibria both in theory and in numerical experiments.
\end{enumerate}

We now discuss the limitations of the current work and avenues for future research. Questions related to finite sampling and its ramifications for the nature of equilibria are challenging and warrant further study. This work assumed that the institution has unbiased (and zero-variance) estimates of true and false positive rates over the entire population, even though it presumably can only observe the qualification of candidates \emph{after} hiring them. This is known as the selective labeling problem, which could introduce bias. In theory, unbiased estimates can be achieved by a small degree of random sampling and appropriate reweighting \citep[see e.g.,][]{Kilbertus19improving}, but this is still a large problem in practice that requires domain-specific knowledge and solutions~\citep{Arteaga18selectivelabels,Kallus2018Residual}.

Our model assumed that individuals make a rational decision to invest and can affect their qualification $Y$ directly. This assumption could be reasonable in settings like hiring, for example, where investing to acquire skills usually leads to increased competence. In some settings, however, individuals may be unable to effectively intervene on $Y$. For example, a business loan applicant who \lledit{is a good business operator} could still default on their loan due to external economic shocks or other forms of disadvantage that have not been taken into account. In this case, the current model does not fully capture the complex societal processes that lead to a positive outcome. \lledit{Our work nonetheless shows that even in an idealized setting where individuals can effectively and rationally intervene on their outcome labels $Y$, underlying factors such as heterogeneity across groups and non-realizability already lead to undesirable tradeoffs at equilibrium. We leave the extensions of the current model beyond rational individual investment to future work.}

\paragraph{Acknowledgements} The authors thank Moritz Hardt, Katherine van Koevering, and Ellen Vitercik for valuable feedback.

\bibliographystyle{abbrvnat}

\bibliography{mybib}
\pagebreak 

\appendix


\section{Proof of Theorem~\ref{thm:eqi_approx_real}}\label{app:theoremproof}
For any $\pi \in  [s,1-s]$, consider the profit-maximizing classifier,
\[\brtheta(\pi) = \argmax_{\theta\in \Theta} ~\text{TPR}(\theta) \cdot \pi - \text{FPR}(\theta) \cdot (1-\pi). \]

Let $\pi(0)$ be the initial qualification rate. For ease of notation, denote $\theta^* = \brtheta(\pi(0))$. We examine the new qualification rate $\pi_1$ under the best response model~$\theta^*$. Since there exists a $\theta$ such that $ \tpr(\theta) \ge 1- \epsilon$ and $\fpr(\theta) \le \epsilon$, we have $\text{TPR}(\theta) \cdot \pi - \text{FPR}(\theta) \cdot (1-\pi) \geq \pi-\epsilon$. It follows that
\[\text{TPR}(\theta^*) \cdot \pi - \text{FPR}(\theta^*) \cdot (1-\pi) = \pi(\tpr(\theta^*) - \fpr(\theta^*)) + (2\pi-1)\fpr(\theta^*) \geq \pi-\epsilon\]

Rearranging this inequality gives the following lower bound on $\tpr(\theta^*) - \fpr(\theta^*)$:
\begin{align}
	\tpr(\theta^*) - \fpr(\theta^*) \geq \frac{\pi-\epsilon -(2\pi-1)\fpr(\theta^*)}{\pi} \label{eq:1}
\end{align}
For $\pi < 1/2$, it follows from \eqref{eq:1} that
\begin{align}
	\tpr(\theta^*) - \fpr(\theta^*) 
	&\ge \frac{\pi-\epsilon}{\pi} \tag{using $\fpr(\theta^*) \ge 0$} \\
	&\geq 1 - \frac{\epsilon}{s}
\end{align}

For $\pi > 1/2$, we have
\begin{align*}
	\fpr(\theta^*) \leq \frac{\pi\tpr(\theta^*) - \pi + \epsilon}{1-\pi} \leq \frac{\epsilon}{1-\pi}.
\end{align*}

Substituting this into \eqref{eq:1} gives:
\begin{align}\label{eq:useincor}
	\tpr(\theta^*) - \fpr(\theta^*) &\ge  \frac{\pi-\epsilon -(2\pi-1)\frac{\epsilon}{1-\pi}}{\pi} = 1-\frac{\epsilon}{1-\pi} \geq 1- \frac{\epsilon}{s}
\end{align}

Therefore the new qualification rate $\pi_1$ satisfies $\pi_1 > G(w(1-\epsilon/s))$.

Notice that $\pi_1 \le G(w) \le 1-s$ and $ \pi_1 > G(w(1-\epsilon/s)) \geq G(w) -  L_Gw\epsilon/s \ge s$, so \lledit{we may repeat the argument to conclude that the qualification rate in the limit must be greater than  $G(w(1-\epsilon/s))$.}

\section{Proof of Corollary~\ref{cor:subsidy}}\label{app:corproof}
From Theorem~\ref{thm:eqi_approx_real} applied to investment level $\bar G$, we can conclude $\bar \pi \geq \bar G(w(1 - \epsilon/s))$. By assumption $\bar G(w(1 - \epsilon/s)) \geq G(w)$, thus it remains to show $G(w) \geq \pi$. However, this follows immediately from the fact that $1 \geq \tpr(\theta) - \fpr(\theta)$, $\forall \theta\in \Theta$.
\section{Supplementary material and proofs for Section \ref{sec:uniform}}
\label{app:proof-uniform}
\begin{proof}[Proof of Proposition \ref{prop:uniform-eq} ]

First consider the policy $\hat{Y}_1 = \Ind{X > h_1}$. Given this policy, we have $\pi_{a_1} = G(w)$ and $\pi_{a_2} = G(w(1-\frac{h_2-h_1}{h_2})) = G\left(w \cdot \frac{h_1}{h_2}\right) $. $\hat{Y}_1$ is optimal for this $\pi$ if the gain from true positives in group $a$ offsets the loss from false positives in group $a_2$ for $X \in [h_1, h_2]$, i.e. we need
	\begin{equation}\label{eq:h1}
	\frac{G(w)\cdot n_{a_1} \cdot p_{\text{TP}}}{1-h_1}  > 		\frac{\left(1-G\left(w \cdot \frac{h_1}{h_2}\right)\right)\cdot n_{a_2} \cdot c_{\text{FP}}}{h_2}.
	\end{equation}
	Now consider the policy $\hat{Y}_2 = \Ind{X > h_2}$. Given this policy, we have $\pi_{a_1} = G\left(w\cdot\frac{1-h_2}{1-h_1}\right)$ and $\pi_{a_2} = G(w)$. $\hat{Y}_2$ is optimal for this $\pi$ if the gain from true positives in group $a$ fails to offset the loss from false positives in group $a_2$, for $X \in [h_1, h_2]$, i.e. we need
	\begin{equation}\label{eq:h2}
	\frac{G\left(w\cdot\frac{1-h_2}{1-h_1}\right)\cdot n_a \cdot p_{\text{TP}}}{1-h_1} <		\frac{\left(1-G\left(w \right)\right)\cdot n_{a_2} \cdot c_{\text{FP}}}{h_2}.
	\end{equation}
	Direct computation shows that \eqref{eq:h1} and \eqref{eq:h2} are satisfied as long as $w$ lies in the interval \[\left( \frac{h_2(1-h_1)}{h_2^2+h_1(1-h_1)},\frac{(1-h_1)^2}{(1-h_2)h_2+(1-h_1)^2} \right).\]
	
	Both equilibria above are stable since \eqref{eq:h1} and \eqref{eq:h2} hold with strict inequality. For all small enough perturbations to $(\pi_{a_1}, \pi_{a_2})$, $h_1$ (or $h_2$) will still remain as the profit maximizing threshold.
	
	There exists an equilibrium at $h=h_1+g$ if we have \begin{equation}
	\frac{G(w\cdot \frac{1-h_1-g}{1-h_1})}{1-h_1} = \frac{1-G(w\cdot \frac{h_1+g}{h_2})}{h_2} 
	\end{equation}
	Direct computation shows that the above equation is satisfied by a unique value of \[g=\frac{(1-h_1)(-wh_2^2+h_2(1-h_1)-wh_1(1-h_1))}{w((1-h_1)^2-h_2^2))},\] and that $\pi_{a_1} = \frac{w(1-h_1-g)}{1-h_1}= \pi_{a_2} =\frac{w(h_1+g)}{h_2}$ if $w=1-h_1$.
	
	\end{proof}
For an illustration of Proposition~\ref{prop:uniform-eq}, consider an example where $n_{a_1} \cdot p_{\text{TP} } = n_{a_2} \cdot c_{\text{FP}}$, $G(c) = c$ for $c < 1$ (uniformly distributed cost of investment), and $h_1 = 0.4$, $h_2 = 0.8$, which gives $h_1/h_2 = 0.5$. Let $w = 0.6$. We compute:
\[ \frac{G(w)}{1-h_1} = 1, \frac{1-G\left(w \cdot \frac{h_1}{h_2}\right)}{h_2}=0.875, \frac{G\left(w\cdot\frac{1-h_2}{1-h_1}\right)}{1-h_1} =  1/3,\frac{ 1-G(w)}{h_2} = 1/2, \] and check that these satisfy the assumptions.

In this example, note that the first equilibrium has \investmentlevel~$0.6$ and $0.3$ for groups $a$ and $a_2$ respectively, while the second equilibrium has \investmentlevel~$0.2$ and $0.6$ respectively. The first equilibrium might be more desirable since there is higher \investmentlevel~overall, though neither equilibrium has equal \investmentlevels~across the two groups.

\paragraph{Comparison of equilibria}  We can compare the equilibria described in Proposition~\ref{prop:uniform-eq} in terms of metrics shown in Table \ref{tab:uniform}. There is no fixed ranking for the Institution's utility; instead the ranking varies depending on the values of $h_1, h_2$, and $w$.

\begin{table}
	\begin{center}
	\scalebox{0.85}{
		\begin{tabular}{| p{3cm} |  p{4cm}|   p{2.5cm}|  p{4.5cm}| p{3cm}|} 
			\hline
			\thead{\bf Equilibrium $h$} & \thead{$h_1$ }&\thead{ $h_2$} & \thead{$h_m := h_1+g$} & \thead{\bf Ranking} \\ [0.5ex] 
			\hline
			\hline
			\makecell{Stability}  & \makecell{Stable} & \makecell{Stable}  &  \makecell{Unstable} & \makecell{-} \\ 
			\hline
			Qualification rate in group $a_1$, $\pi_{a_1}$  & \makecell{$w$} &\makecell{$w\cdot\frac{1-h_2}{1-h_1}$}  & \makecell{$\frac{(w-h_2)(1-h_1)}{(1-h_1)^2-h^2_2}$} & $h_1 \succ h_m \succ h_2$ (Lem.~\ref{lem:skilla1}) \\ 
			\hline
			Qualification rate in group $a_2$, $\pi_{a_2}$   &  \makecell{$ w\cdot \frac{h_1}{h_2}$} &\makecell{ $w$} & \makecell{$\frac{(1-h_1)^2-wh_2}{(1-h_1)^2-h_2^2}$} & $h_2 \succ h_m\succ h_1$ (Lem.~\ref{lem:skilla2})  \\ 
			\hline
			Balance in qualification rate $|\pi_{a_1}-\pi_{a_2}|$ & \makecell{$w\cdot \frac{h_2-h_1}{h_2}$} & \makecell{$w\cdot \frac{h_2-h_1}{1-h_1}$} & \makecell{$\frac{|1-h_1-w|}{h_2-(1-h_1)}$} & $h_m\succ h_1 \succ h_2$ (Lem.~\ref{lem:balance}) \\ 
			\hline
			
			\hline
		\end{tabular}}
		\caption{Comparison of equilibria for uniform scores. In this table we refer to each equlibria using the associated threshold decision policy.}\label{tab:uniform}
	\end{center}
\end{table}

\begin{lemma}[Skill acquisition in group $a_1$]\label{lem:skilla1}
	Let $w \in (w_l, w_u)$, as defined in \eqref{eq:w-bounds}. Then for $h_1, h_2$, as defined in Proposition \ref{prop:uniform-eq}, we have 
	\begin{equation}
	w\cdot\frac{1-h_2}{1-h_1} < \frac{(w-h_2)(1-h_1)}{(1-h_1)^2-h^2_2} < w
	\end{equation}
\end{lemma}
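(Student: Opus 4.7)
The approach is to exploit the monotonicity of group $a_1$'s qualification-rate function. Recall that when the institution uses a threshold $h \in [h_1, 1]$, group $a_1$'s true positive rate is $(1-h)/(1-h_1)$ and its false positive rate is $0$. With $G(c) = c$, the induced qualification rate in group $a_1$ is therefore $\pi_{a_1}(h) = w(1-h)/(1-h_1)$, which is a strictly decreasing affine function of $h$ on $[h_1, 1]$. Direct substitution gives $\pi_{a_1}(h_1) = w$ and $\pi_{a_1}(h_2) = w(1-h_2)/(1-h_1)$; these are exactly the outer expressions of the claimed inequality.

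I would next verify that the middle expression coincides with $\pi_{a_1}(h_m)$, where $h_m = h_1 + g$. Substituting the closed-form expression for $g$ from Proposition~\ref{prop:uniform-eq} into $w(1-h_1-g)/(1-h_1)$, one factor of $(1-h_1)$ cancels from the numerator of $g$, and the two $wh_2^2$ contributions coming from expanding $w((1-h_1)^2 - h_2^2)$ against the $-wh_2^2$ inside $g$ telescope. After collecting terms the numerator is seen to be a scalar multiple of $w - h_2$, and a final cancellation of $w$ and $(1-h_1)$ yields exactly $(w-h_2)(1-h_1)/((1-h_1)^2 - h_2^2)$. This is a routine algebraic simplification, which I would carry out explicitly but not belabor.

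Finally, Proposition~\ref{prop:uniform-eq} asserts that $g \in (0, h_2 - h_1)$ under the hypothesis $w \in (w_l, w_u)$, so $h_1 < h_m < h_2$. Combining this with the strict monotonicity of $\pi_{a_1}(\cdot)$ on $[h_1,1]$ immediately gives $\pi_{a_1}(h_2) < \pi_{a_1}(h_m) < \pi_{a_1}(h_1)$, which is precisely the two strict inequalities of the lemma. The only conceptual point worth noting is that the ``hard'' algebraic computation in the middle paragraph is really just a bookkeeping exercise; the real content is that the claim reduces to monotonicity of a linear function between two explicit endpoints. The main obstacle I anticipate is keeping the sign of $(1-h_1)^2 - h_2^2$ straight during the algebraic simplification, since this quantity can be positive or negative depending on whether $h_1 + h_2$ is less than or greater than $1$, but the final identity is invariant to this sign and the containment $h_m \in (h_1,h_2)$ is handed to us by the proposition.
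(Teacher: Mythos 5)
Your proof is correct, but it takes a genuinely different route from the paper's. The paper proves the two inequalities separately and directly: for each one it divides through by $w$, observes that the resulting left-hand side is monotone in $w$, and checks that equality holds at the relevant endpoint $w_l$ or $w_u$ of the admissible interval --- a purely computational verification that never mentions $\hm$ or $g$. You instead recognize all three quantities as values of the single strictly decreasing affine map $h \mapsto w(1-h)/(1-h_1)$ evaluated at $h_2$, $\hm = h_1+g$, and $h_1$ respectively (the algebraic identity $w(1-h_1-g)/(1-h_1) = (w-h_2)(1-h_1)/\bigl((1-h_1)^2-h_2^2\bigr)$ does check out: the $wh_2^2$ terms cancel and the numerator factors as $(1-h_1)^2(w-h_2)/(wD)$ with $D=(1-h_1)^2-h_2^2$), and then invoke $g \in (0, h_2-h_1)$ to conclude by monotonicity. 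Your version is more conceptual, handles both inequalities at once, and sidesteps the sign of $(1-h_1)^2-h_2^2$ (which is in fact negative under the implicit standing assumption $h_2 > 1-h_1$, a point the paper's divisions quietly depend on). The one caveat is that your argument makes the lemma exactly \emph{equivalent} to the containment $g \in (0, h_2-h_1)$: Proposition~\ref{prop:uniform-eq} asserts this as a ``Note,'' but its proof in Appendix~\ref{app:proof-uniform} never actually verifies it, so you are leaning on a fact the paper states without proof. The paper's endpoint computations for this lemma are, in effect, the missing verification of that containment; if you want your proof to be self-contained you should supply a short check that $g>0$ for $w$ below $h_2(1-h_1)/(h_2^2+h_1(1-h_1))$ is false and above it is true (note also that the labels $w_l, w_u$ in \eqref{eq:w-bounds} are swapped relative to the interval derived in the proof of Proposition~\ref{prop:uniform-eq}, so the interval must be read as $(\min,\max)$).
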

\begin{proof}
	First we show that $\frac{(w-h_2)(1-h_1)}{(1-h_1)^2-h^2_2} < w$ for all $w \in (w_l,w_u)$. It suffices to show that \[ \left(1-\frac{h_2}{w}\right)(1-h_1) \le(1-h_1)^2-h^2_2  \] holds for $w = w_u$, since the LHS is strictly increasing in $w$. We may check by computation that the above in fact holds with equality.
	
	Next we show that $\frac{(w-h_2)(1-h_1)}{(1-h_1)^2-h^2_2} >	w\cdot\frac{1-h_2}{1-h_1} $ for all $w \in (w_l,w_u)$. This amounts to showing that \[\left(1-\frac{h_2}{w}\right)(1-h_1)^2 \ge(1-h_2)((1-h_1)^2-h^2_2) ),  \] for $w = w_l$, since the LHS is strictly increasing in $w$.  We may check by computation that the above in fact holds with equality.
\end{proof}

\begin{lemma}[Qualification rate in group $a_2$]\label{lem:skilla2}
	Let $w \in (w_l, w_u)$, as defined in \eqref{eq:w-bounds}. Then for $h_1, h_2$, as defined in Proposition \ref{prop:uniform-eq}, we have 
	\begin{equation}
	 w\cdot \frac{h_1}{h_2} < \frac{(1-h_1)^2-wh_2}{(1-h_1)^2-h_2^2} < w
	\end{equation}
\end{lemma}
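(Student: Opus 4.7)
The plan is to mirror the strategy of Lemma~\ref{lem:skilla1} step for step: I would cross-multiply each of the two strict inequalities (tracking the sign of the common denominator $(1-h_1)^2 - h_2^2$), collect terms linear in $w$, and reduce each to a simple bound on $w$ that matches exactly $w_l$ or $w_u$. Since both resulting expressions are linear (in fact, affine) in $w$, it then suffices to verify equality at the appropriate endpoint.

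First I would handle the upper bound $\frac{(1-h_1)^2 - wh_2}{(1-h_1)^2 - h_2^2} < w$. Clearing denominators (with the same sign convention on $(1-h_1)^2 - h_2^2$ that is implicit in the definition of $g$ in Proposition~\ref{prop:uniform-eq}) and bringing all $w$-terms to one side, the inequality rewrites as $w \bigl[(1-h_1)^2 + h_2(1-h_2)\bigr] > (1-h_1)^2$. The left-hand side is strictly increasing in $w$ with positive slope, so it suffices to check equality at the boundary $w = w_l = (1-h_1)^2 / \bigl[(1-h_1)^2 + h_2(1-h_2)\bigr]$, which is immediate by substitution.

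Next, for the lower bound $w \cdot \frac{h_1}{h_2} < \frac{(1-h_1)^2 - wh_2}{(1-h_1)^2 - h_2^2}$, I again cross-multiply. Expanding and using the factorization $h_1(1-h_1)^2 - h_1 h_2^2 + h_2^2 = (1-h_1)\bigl[h_1(1-h_1) + h_2^2\bigr]$, then cancelling a factor of $(1-h_1)$, reduces the inequality to $w \bigl[h_1(1-h_1) + h_2^2\bigr] < h_2(1-h_1)$, which is exactly $w < w_u$. As before, linearity in $w$ means a single check of equality at $w = w_u$ closes this case.

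The only subtle point, as in Lemma~\ref{lem:skilla1}, is to maintain the correct sign of the denominator $(1-h_1)^2 - h_2^2$ when cross-multiplying. This is exactly the sign convention already implicit in the definition of $g$ in Proposition~\ref{prop:uniform-eq} (where the same expression appears in the denominator) and used throughout Lemma~\ref{lem:skilla1}, so consistency is inherited; with that in hand, the two algebraic reductions above are routine and the lemma follows.
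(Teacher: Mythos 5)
Your overall strategy is the same as the paper's: reduce each inequality to a condition monotone in $w$ and verify equality at the relevant endpoint (the paper divides through by $w$ to get a decreasing function of $w$; you keep things affine in $w$ --- an immaterial difference). The problem is a sign error in both of your reductions. In this setting the denominator $(1-h_1)^2-h_2^2$ is \emph{negative}: the construction requires $h_2>1-h_1$ (stated explicitly in the proof of Lemma~\ref{lem:balance} and satisfied in the paper's worked example $h_1=0.4$, $h_2=0.8$), and indeed positivity of $g$ on the relevant $w$-interval forces $(1-h_1)^2-h_2^2<0$. Cross-multiplying therefore flips both inequalities, and your two displayed reductions come out reversed: the upper bound $\frac{(1-h_1)^2-wh_2}{(1-h_1)^2-h_2^2}<w$ is equivalent to $w\bigl[(1-h_1)^2+h_2(1-h_2)\bigr]<(1-h_1)^2$, not $>$, and the lower bound is equivalent to $w\bigl[h_1(1-h_1)+h_2^2\bigr]>h_2(1-h_1)$, not $<$. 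Concretely, at $h_1=0.4$, $h_2=0.8$, $w=0.6$ the lemma's chain of inequalities holds ($0.3<0.4286<0.6$), yet your reduction of the upper bound asks for $0.6\cdot 0.52>0.36$, i.e.\ $0.312>0.36$, which is false; your second reduction likewise fails there ($0.528<0.48$ is false).

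Part of the confusion is inherited from the paper: the labels in \eqref{eq:w-bounds} are swapped relative to their roles, since numerically $w_u<w_l$ and the interval actually used in the proof of Proposition~\ref{prop:uniform-eq} is $\bigl(\tfrac{h_2(1-h_1)}{h_2^2+h_1(1-h_1)},\,\tfrac{(1-h_1)^2}{(1-h_2)h_2+(1-h_1)^2}\bigr)$. With the corrected signs, the upper bound of the lemma reduces to ``$w$ lies below the right endpoint'' and the lower bound to ``$w$ lies above the left endpoint,'' which is exactly the hypothesis; so the lemma itself is fine, but your proof as written derives the negations of the two conditions you actually need, and the inequality directions must be repaired before the endpoint equality checks establish anything.
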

\begin{proof}
	First we show that $\frac{(1-h_1)^2-wh_2}{(1-h_1)^2-h_2^2} < w$ for all $w \in (w_l,w_u)$. It suffices to show that \[ \frac{(1-h_1)^2}{w} - h_2 \le(1-h_1)^2-h^2_2 ) \] holds for $w = w_l$, since the LHS is strictly decreasing in $w$. We may check by computation that the above in fact holds with equality.
	
	Next we show that $ w\cdot \frac{h_1}{h_2} < \frac{(1-h_1)^2-wh_2}{(1-h_1)^2-h_2^2} $ for all $w \in (w_l,w_u)$. This amounts to showing that \[\frac{h_2(1-h_1)^2}{w} - h_2^2 \ge h_1((1-h_1)^2-h^2_2 )  \] for $w = w_u$, since the LHS is strictly decreasing in $w$.  We may check by computation that the above in fact holds with equality.
\end{proof}

\begin{lemma}[Unstable Equilibrium is the most balanced]\label{lem:balance}
	Let $w \in (w_l, w_u)$, as defined in \eqref{eq:w-bounds}. Then for $h_1, h_2$, as defined in Proposition \ref{prop:uniform-eq}, we have 
	\begin{equation}
	\frac{|1-h_1-w|}{h_2-(1-h_1)} < w\cdot \frac{h_2-h_1}{h_2} < w\cdot \frac{h_2-h_1}{1-h_1}
	\end{equation}
\end{lemma}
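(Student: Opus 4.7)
The plan is to reduce the two inequalities to elementary algebraic checks at the endpoints of $(w_l, w_u)$, closely following the template set by Lemmas~\ref{lem:skilla1} and~\ref{lem:skilla2}. The right-hand inequality $w \cdot \frac{h_2-h_1}{h_2} < w \cdot \frac{h_2-h_1}{1-h_1}$ simplifies, after dividing through by the positive quantity $w(h_2-h_1)$, to the comparison $1-h_1 < h_2$ (or equivalently its reverse with the absolute-value interpretation of the denominator $h_2-(1-h_1)$). I would verify this from the ambient ordering of $h_1, h_2$ in the regime making $(w_l, w_u)$ nonempty, so this part is essentially immediate.

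For the left-hand inequality, the first step would be to give an explicit closed form for the $\hm$-balance. Substituting the equilibrium values from Proposition~\ref{prop:uniform-eq} into $\pi_{a_1} - \pi_{a_2}$ and factoring $(1-h_1)^2 - h_2^2 = (1-h_1-h_2)(1-h_1+h_2)$ collapses the expression to $\pi_{a_1}^{\hm} - \pi_{a_2}^{\hm} = \frac{w-(1-h_1)}{1-h_1-h_2}$, which matches the LHS of the lemma and simultaneously certifies that the denominator should be read as $|h_2-(1-h_1)|$. With this identification in hand, the target inequality becomes $\frac{|1-h_1-w|}{|h_2-(1-h_1)|} < \frac{w(h_2-h_1)}{h_2}$, and splitting on the sign of $1-h_1-w$ renders each side linear in $w$ on each piece.

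Linearity on each piece means the inequality on $(w_l, w_u)$ is equivalent to the two endpoint inequalities at $w = w_l$ and $w = w_u$, together with containment of $1-h_1$ in (or outside) the interval. The endpoint checks I expect to reduce to algebraic identities mirroring those in Lemmas~\ref{lem:skilla1} and~\ref{lem:skilla2}: substitute the explicit formulas
\[ w_l = \frac{(1-h_1)^2}{(1-h_2)h_2 + (1-h_1)^2}, \qquad w_u = \frac{h_2(1-h_1)}{h_2^2 + h_1(1-h_1)},\]
clear denominators, and watch the terms cancel, with equality (or a strict margin) emerging at each boundary. The main obstacle will be bookkeeping, namely carrying the two cases from the absolute value $|1-h_1-w|$ through the algebra simultaneously with the sign of $1-h_1-h_2$. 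I would organize the computation by first collecting $\pi_{a_1}^{\hm}$, $\pi_{a_2}^{\hm}$ in closed form, then treating the cases $w \geq 1-h_1$ and $w < 1-h_1$ symmetrically, so that the two endpoint verifications slot directly into the identities already used earlier in the appendix.
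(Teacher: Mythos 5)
Your proposal is correct and follows essentially the same route as the paper: reduce the right-hand inequality to the standing assumption $h_2 > 1-h_1$, split on the sign of $1-h_1-w$, and reduce each piece to an algebraic check at $w_l$ or $w_u$ (the paper makes the same reduction after first dividing by $w$ and using monotonicity in $w$ on each piece, which is equivalent to your linearity/convexity argument once you note that at the kink $w=1-h_1$ the left side vanishes). Your preliminary derivation of the closed form $\pi_{a_1}^{\hm}-\pi_{a_2}^{\hm} = \frac{w-(1-h_1)}{1-h_1-h_2}$ is a worthwhile sanity check on the table entry but does not change the argument.
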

\begin{proof}
First note that since $h_2 > 1-h_1$ by assumption, we have $ w\cdot \frac{h_2-h_1}{h_2} < w\cdot \frac{h_2-h_1}{1-h_1}$, so it suffices to show that \[	\frac{|\frac{1-h_1}{w}-1|}{h_2-(1-h_1)} <  \frac{h_2-h_1}{h_2}\] for all $w \in (w_l, w_u)$.
Consider the first case: $w \in (w_l, 1-h_1]$. We want to show \begin{equation}\label{eq:lem1-1}
\frac{\frac{1-h_1}{w}-1}{h_2-(1-h_1)} <  \frac{h_2-h_1}{h_2}.
\end{equation} Since the LHS is decreasing in $w$, it suffices to show that \eqref{eq:lem1-1} holds for $w = w_l$. By computation, we have following:
\begin{equation}
\frac{\frac{1-h_1}{w_l}-1}{h_2-(1-h_1)} <  \frac{h_2-h_1}{h_2} \iff (h_2+1)(h_2-h_1)(h_1+h_2-1) > 0,
\end{equation}
which is indeed satisfied since we have $h_2 > h_1$ and $h_2 > 1-h_1$ by assumption.

Now consider the second case: $w \in (w_l, 1-h_1)$. We want to show \begin{equation}\label{eq:lem1-2}
\frac{1-\frac{1-h_1}{w}}{h_2-(1-h_1)} <  \frac{h_2-h_1}{h_2}. 
\end{equation} 
Since the LHS is increasing in $w$, it suffices to show that \eqref{eq:lem1-2} holds for $w= w_u$. By computation, we have following:
\begin{equation}
\frac{1-\frac{1-h_1}{w_u}}{h_2-(1-h_1)} <  \frac{h_2-h_1}{h_2} <  \frac{h_2-h_1}{h_2} \iff (h_2-h_1)(h_1+h_2-1) > 0,
\end{equation}
which is indeed satisfied since we have $h_2 > h_1$ and $h_2 > 1-h_1$ by assumption.

\end{proof}
\section{Supplementary material and proofs for Section \ref{sec:multivariate-gaussian}
 }\label{app:proof-mvgaussians}

\begin{proof}[Proof of Proposition \ref{prop:gaussian-eq} ]
	Denote $r:=\frac{\ptp}{\cfp}$. For any hyperplane $h$, we may compute the true positive rate and false positive rate for a group with hyperplane $h_i$ as follows:
	\begin{equation}
	\tpr_{a_i} = 1-\angle_{h,h_i},\quad \fpr_{a_i} = \angle_{h,h_i}.
	\end{equation}
	Therefore, for any investment levels $(\pi_{a_1}, \pi_{a_2})$, the firm solves the following profit maximization problem:
	\begin{align*}
	h^* &= \argmax_{h \in S_{d-1} } ~r\pi_{a_1}( 1-\angle_{h,h_1}) - (1-\pi_{a_1})\angle_{h,h_1} + r\pi_{a_2}( 1-\angle_{h,h_2}) - (1-\pi_{a_2})\angle_{h,h_2} \\
	&= \argmax_{h \in S_{d-1} } ~(1-r)\pi_{a_1}\angle_{h,h_1} + (1-r)\pi_{a_2}\angle_{h,h_2} - (\angle_{h,h_1}+\angle_{h,h_2}).
	\end{align*}
	
	The last term $\angle_{h,h_1}+\angle_{h,h_2}$ is minimized whenever $h$ is in the convex hull of $h_1$ and $h_2$. Then it is clear that for $r > 1$, the profit is maximized at $h = h_1$ whenever $\pi_{a_1} > \pi_{a_2}$, at $h=h_2$ whenever $\pi_{a_1} < \pi_{a_2}$, and at any $h$ in the convex hull of $h_1$ and $h_2$ whenever  $\pi_{a_1} = \pi_{a_2}$.
	
	To conclude that $h=h_1$ and $h=h_2$ are indeed stable equilibria, we check the best response \investmentlevels~by both groups at $h=h_1$ and $h=h_2$ satisfies the optimality conditions. For $h = h_1$, we have $\brpi_{a_1}(h_1) = G(w)$ and $ \brpi_{a_2}(h_1) = G\left(w \cdot (1-2\angle_{h_{1},h_2})\right)$, and indeed $\brpi_{a_1} >  \brpi_{a_2}$, by the monotonicity of $G$. For $h = h_2$, we have $\brpi_{a_1}(h_2) =G\left( w \cdot (1-2\angle_{h_{1},h_2})\right)$ and $ \brpi_{a_2}(h_2) = G(w) $, and indeed $\brpi_{a_1} <  \brpi_{a_2}$.
	
	Now we identify the unstable equilibrium, which is $h = \hm$, because the \investmentlevel~is indeed equal for both groups under this policy, i.e., we have \[ \brpi_{a_1} =  G(w(1-2\angle_{h_{1},\hm}) )=G(w(1-\angle_{h_{1},h_2}) )  = G(w(1-2\angle_{h_{2},\hm})) = \brpi_{a_2}. \] \lledit{When both groups are investing at this rate, we may assume that the institution's best response involves breaking ties among all utility-maximizing hyperplanes to choose $\hm$. This ensures that the dynamics are well-defined.} Notice that this is an unstable equilibrium, since this is the unique value of $h$ such that $\brpi_{a_2}(h) = \brpi_{a_2}(h)$, and any deviation from equal \investmentlevels~will change the profit-maximizing hyperplane to $h_1$ or $h_2$.
\end{proof}

\begin{proof}[Proof of Proposition \ref{prop:gaussian-limit} ]
	
	Following the proof of Proposition~\ref{prop:gaussian-eq}, we find that for $\ptp< \cfp$, the profit is maximized at $h=h_1$ whenever  $\pi_{a_1} <  \pi_{a_2}$, at $h=h_2$ whenever $\pi_{a_1} >  \pi_{a_2}$, and at any $h$ in the convex hull of $h_1$ and $h_2$ whenever  $\pi_{a_1} = \pi_{a_2}$.
	
	Checking the \investmentlevel~at $h=h_1$ (resp. $h=h_2$), we find that  $\brpi_{a_1}(h_1) = G(w)$ and $ \brpi_{a_2}(h_1) = G\left(w \cdot (1-\angle_{h_{1},h_2})\right)$, so  $\brpi_{a_1}(h_1) >  \brpi_{a_2}(h_1) $. Similarly, we have $\brpi_{a_1}(h_2)<  \brpi_{a_2}(h_2) $. This implies there is a 2-point limit cycle at $h=h_1$ and $h=h_2$.
	
	As before, the unstable equilibrium is at $h = \hm$, because the \investmentlevel~is indeed equal for both groups under this policy. Notice that this is an unstable equilibrium, since this is the unique value of $h$ such that $\brpi_{a_2}(h) = \brpi_{a_2}(h)$.
\end{proof}

\begin{table}[tbp]
	\begin{center}
		\begin{tabular}{|| p{2.5cm} |  p{3cm}|   p{3cm}|  p{3cm}| p{2.5cm}||} 
			\hline
			Equilibrium $h$ & $h_1$ & $h_2$ & $\hm$ & Ranking by metric \\ [0.5ex] 
			\hline\hline
			Stability  & Stable & Stable  &  Unstable & - \\ 
			\hline
			Qualification rate in group $a_1$, $\pi_{a_1}$  & $w$ &$w (1-2\angle)$ & $w(1-\angle)$ & $h_1 \succ \hm \succ h_2$  \\ 
			\hline
			Qualification rate in group $a_2$, $\pi_{a_2}$   & $w (1-2\angle)$  & $ w$ & $w(1-\angle)$  & $h_2 \succ \hm \succ h_1$   \\ 
			\hline
			Balance in qualification rate $|\pi_{a_1}-\pi_{a_2}|$ & $2w\angle$ & $2w\angle$ & $w\angle$ & $\hm\succ h_1 \sim h_2$  \\ 
			\hline
			Institution's utility & $\ptp w (2-3\angle)+ 2(\ptp-\cfp)w\angle^2$ & $\ptp w (2-3\angle)+ 2(\ptp-\cfp)w\angle^2$ &  $\ptp  w (2-3\angle) + (\ptp-\cfp)w\angle^2$& $h_1\sim h_2 \succ \hm$ \\ [.5ex] 
			\hline
		\end{tabular}
	\end{center}
	\caption{Comparison of equilibria for Multivariate Gaussian features}\label{app:tab:gaussian}
\end{table}

\paragraph{Comparison of equilibria} In Table \ref{app:tab:gaussian}, we compare the equilibria described in Proposition~\ref{prop:gaussian-eq} on several metrics. We use $\angle$ to denote $\angle_{h_1,h_2}$.

\section{Supplementary material for Section \ref{sec:subsidies}}
\label{app:proof-subsidies}
\begin{proof}[Proof of Proposition \ref{prop:unequal_costs_eq} ]
	First notice that by assumption, $h = h_1$ cannot be at equilibrium. It is easy to check that $G_1(w(1-2\angle_{h_{1},h_2})) \le G_1(w) < G_2(w(1-2\angle_{h_{1},h_2})) \leq G_2(w)$, so $h=h_2$ is still at a stable equilibrium. Now, for any $h$ that is a convex combination of $h_1$ and $h_2$, we have $G_1(w(1-2\angle_{h,h_1})) \leq G_1(w) < G_2(w(1-2\angle_{h_{1},h_2})) \le G_2(w(1-2\angle_{h,h_2}))$, implying that $G_1(w(1-2\angle_{h,h_1})) \ne G_2(w(1-2\angle_{h,h_2}))$ for all $h$ that maximize institutional utility, so no other fixed points exist.
\end{proof}

\section{Supplementary material and proofs for Section \ref{sec:non-realizable}
}\label{app:non-realizable}

The following result from \citet{coate93will} establishes conditions under which multiple equilibria exists for a single group when the features $\calX = [0,1]$ represent a score and the assessment rule is a threshold function. For completeness, we show that it can be derived as a consequence of Proposition~\ref{prop:multi_eq}.

\begin{proposition}[Proposition 1 of \citet{coate93will}]\label{prop:coate_loury}
	\nhedit{Consider the case where $\calX = [0,1]$ is a space of one-dimensional scores, $\Theta=[0,1]$, and $\hat Y_\theta = \Ind{X > \theta}$ for all $\theta\in \Theta$. }
	Denote the conditional score CDFs as
	\nhedit{\[ F_1(x) \coloneqq \Pr(X<x \mid Y=1), ~ F_0(x) \coloneqq \Pr(X<x \mid Y=0). \]} 
	Let $f_1(x), f_0(x)$ be the \nhedit{point densities of $F_1$ and $F_0$, respectively.} Let $\phi(x) \coloneqq \frac{f_0(x)}{f_1(x)}$ be the likelihood ratio at $x$. Let $r \coloneqq \frac{\ptp}{\cfp}$ be the ratio of net gain to loss for the firm.
	Assume $\phi(x)$ is strictly decreasing --- \nhedit{i.e. as score increases, candidate is more likely to be skilled --- continuous and strictly positive on $[0,1]$.} 
	Further assume that $G(c)$  is continuous and $ G(w(F_0(\theta) - F_1(\theta))) > \frac{\phi(\theta)}{r + \phi(\theta)}$ for some $\theta \in (0,1)$. Then there exists at least two distinct non-zero equilibria.
\end{proposition}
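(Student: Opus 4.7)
The plan is to derive Proposition~\ref{prop:coate_loury} as a direct specialization of Proposition~\ref{prop:multi_eq} to one-dimensional scores and threshold assessment rules, by matching the Coate--Loury hypothesis to the $x < G(w\beta(x))$ condition in Proposition~\ref{prop:multi_eq}. First I would verify the structural preconditions of Proposition~\ref{prop:multi_eq} in this setting. The extreme classifiers are immediate: $\theta = 1$ gives $\Pr(\hat{Y}_\theta = 1) = 0$ and $\theta = 0$ gives $\Pr(\hat{Y}_\theta = 1) = 1$. The strict positivity of the likelihood ratio is directly assumed, and continuity of $G$ is also given. The only non-trivial precondition is continuity of $\beta$, which I would obtain as a by-product of the next step.

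Next, I would compute $\brtheta(\pi)$ explicitly. The institution's objective is
\[
\ptp\,\pi\,(1 - F_1(\theta)) - \cfp\,(1-\pi)\,(1 - F_0(\theta)),
\]
whose derivative in $\theta$ is $-\ptp\,\pi\,f_1(\theta) + \cfp\,(1-\pi)\,f_0(\theta)$. Because $\phi = f_0/f_1$ is strictly decreasing (monotone likelihood ratio), continuous, and strictly positive, the first-order condition $\phi(\brtheta(\pi)) = r\pi/(1-\pi)$ identifies a unique interior optimizer whenever $r\pi/(1-\pi)$ lies in the range of $\phi$, and otherwise the optimum falls at an endpoint $\theta \in \{0,1\}$. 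In either case, $\brtheta$ is continuous in $\pi$, hence so is $\beta(\pi) = F_0(\brtheta(\pi)) - F_1(\brtheta(\pi))$.

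The final step is to translate the Coate--Loury hypothesis. For each $\theta \in (0,1)$, define $\pi(\theta) := \phi(\theta)/(r + \phi(\theta)) \in (0,1)$; by construction $\phi(\theta) = r\pi(\theta)/(1-\pi(\theta))$, so $\brtheta(\pi(\theta)) = \theta$ and $\beta(\pi(\theta)) = F_0(\theta) - F_1(\theta)$. Thus the hypothesis $G(w(F_0(\theta) - F_1(\theta))) > \phi(\theta)/(r + \phi(\theta))$ is exactly $G(w\,\beta(\pi(\theta))) > \pi(\theta)$, which is the condition $x < G(w\beta(x))$ at $x = \pi(\theta) \in (0,1)$ required by Proposition~\ref{prop:multi_eq}. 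Applying Proposition~\ref{prop:multi_eq} then yields at least two distinct non-zero equilibria.

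The main obstacle I anticipate is the boundary bookkeeping around $\brtheta$: ensuring the FOC-based characterization extends continuously to the regions where $r\pi/(1-\pi)$ escapes the range of $\phi$, and confirming that the $\theta \in (0,1)$ featured in the Coate--Loury hypothesis really does correspond via $\pi(\theta)$ to an interior $\pi$ in $(0,1)$. The strict positivity and continuity of $\phi$ on $[0,1]$ make both points routine, but they must be stated carefully so that the hypothesis transfer is airtight.
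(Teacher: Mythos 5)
Your proposal is correct and follows essentially the same route as the paper: both derive the result as a specialization of Proposition~\ref{prop:multi_eq}, characterize $\brtheta(\pi)$ via the monotone likelihood ratio (the paper writes it as an infimum, you as a first-order condition, which are the same thing here), establish continuity of $\beta$, and translate the Coate--Loury hypothesis through the correspondence $\pi(\theta) = \phi(\theta)/(r+\phi(\theta))$ (the paper phrases this as surjectivity of $\brtheta$). Your explicit verification of the accept-all/reject-all preconditions and the boundary behavior of $\brtheta$ is slightly more careful than the paper's, but the argument is the same.
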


\begin{proof}[Proof of Proposition~\ref{prop:coate_loury}]
Note that for any $\theta$, $\tpr(\theta) - \fpr(\theta) = F_0(\theta) - F_1(\theta)$.  Therefore, the group's qualification rate in response to assessment parameter $\theta$ is 
		\begin{equation}
		\brpi(\theta) = G(w(F_0(\theta) - F_1(\theta))).
		\end{equation}
	
	Since $\phi(x)$ is strictly decreasing, the utility maximizing assessment rule $\theta$ in response to the qualification rate $\pi$ is
	\begin{equation}
	\brtheta(\pi) = \inf\left\{x\in[0,1]: r \ge \frac{1-\pi}{\pi}\cdot \phi(x) \right\}.
	\end{equation}
	Since $\phi(x)$ is continuous and strictly positive, we must also have that $F_0, F_1$ are continuous, and so in particular $\beta(\pi) = F_0(\brtheta(\pi)) - F_1(\brtheta(\pi)) $ is continuous. By assumption, $G$ is continuous and there exists  $x \in (0,1)$  such that $x< G(w\beta(x))$ since $\brtheta(\pi)$ is surjective. Therefore, the claim follows from Proposition~\ref{prop:multi_eq}.
\end{proof}
\section{Supplementary material for Section \ref{sec:experiments}
 }\label{app:expt}

We collect here additional figures for Section \ref{sec:experiments}. 

\begin{figure}[htbp]
	\centering
	\includegraphics[width=.7\columnwidth]{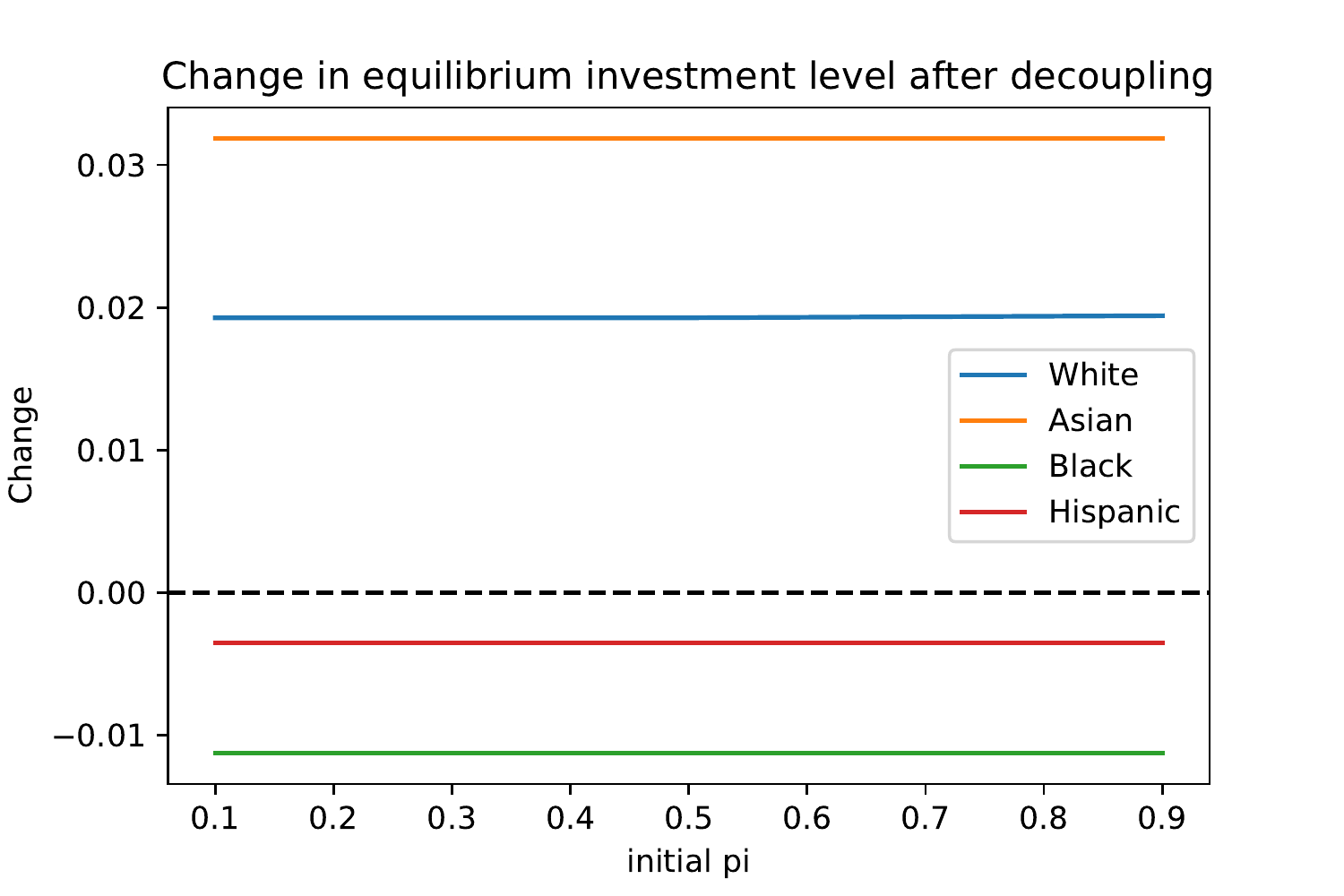}%
	\caption{Effects of decoupling without multiple equilibria. $G$ is the uniform distribution on $[0,1]$ for all groups and the reward is $w=1$. The decoupled equilibria are unique for this choice of $G$.}\label{fig:decoupling1}
\end{figure}

\begin{figure}[htbp]
	\centering
	\includegraphics[width=.5\columnwidth]{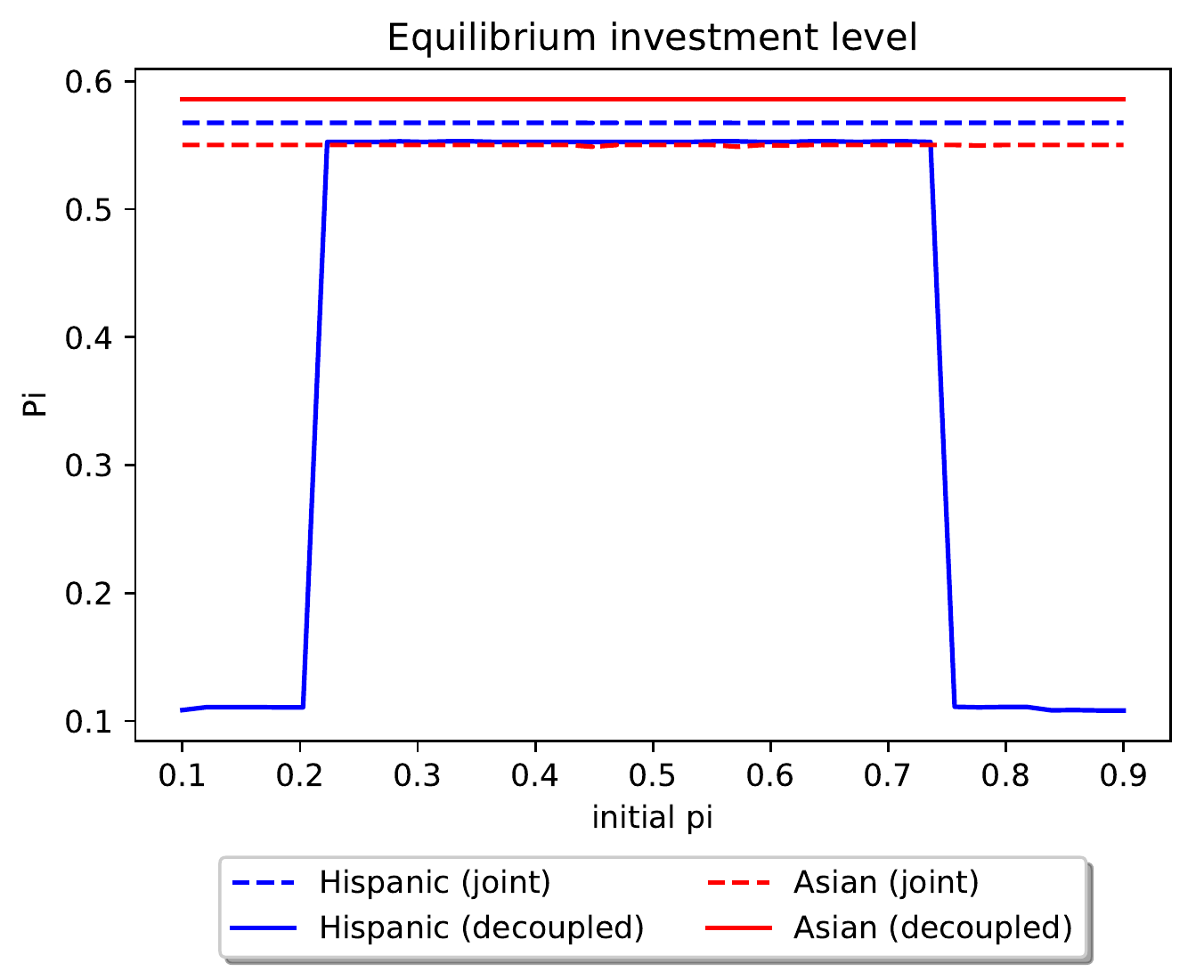}%
	\includegraphics[width=.5\columnwidth]{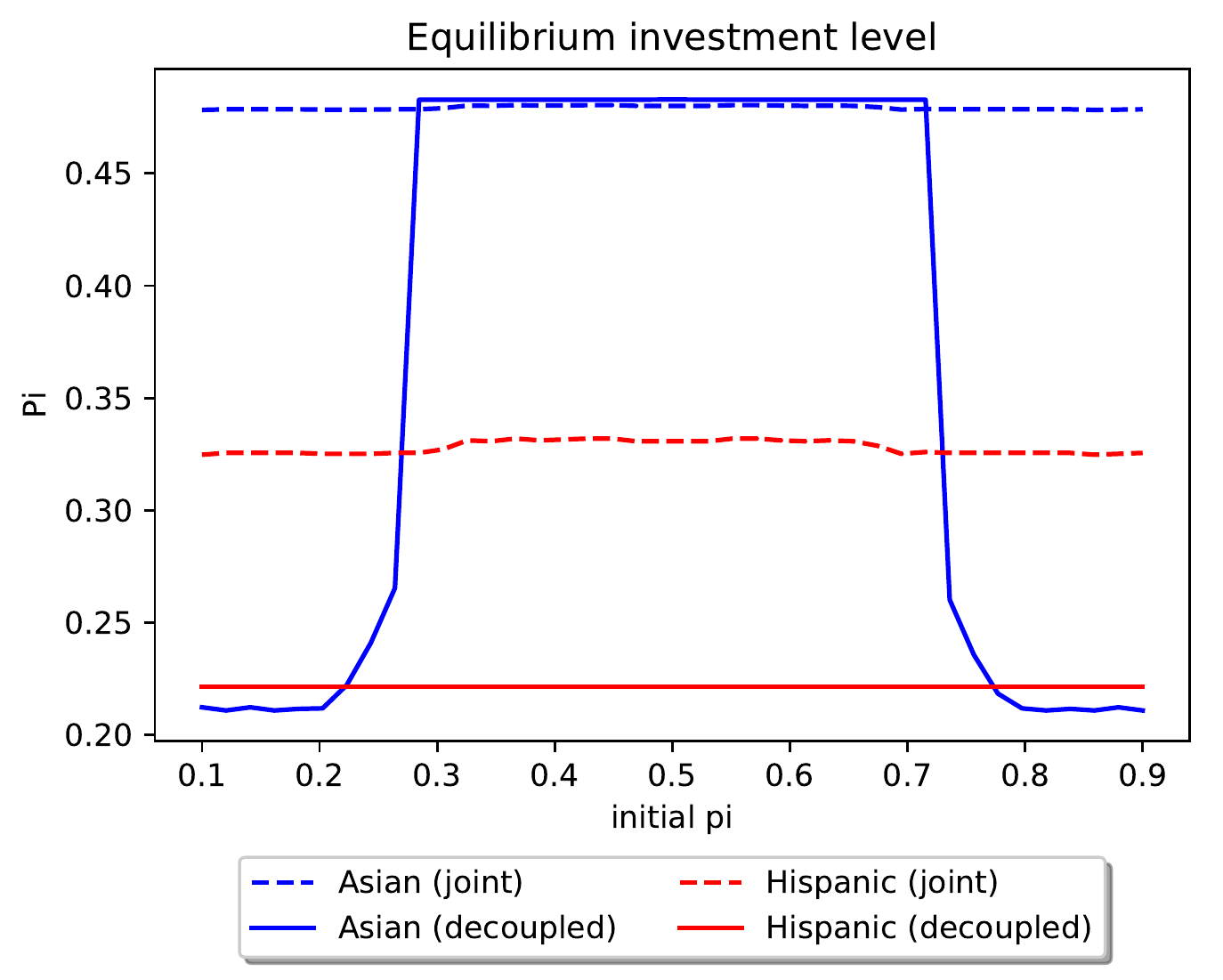}
	\caption{Effects of decoupling in presence of multiple equilibria. We vary the initial level of investment in the x-axis. A different bimodal Gaussian distribution $G$ was used to generated each plot.}\label{fig:decoupling-opt1}
\end{figure}

\end{document}